\documentclass[hidelinks,onefignum,onetabnum,nohypdvips]{siamart251216}

\usepackage[T1]{fontenc}
\usepackage{amsmath,amssymb,amsfonts}
\usepackage{bm}
\usepackage{graphicx}
\usepackage{booktabs}
\usepackage{tabularx}
\usepackage{threeparttable}
\usepackage{dcolumn}
\usepackage{multirow}
\usepackage{array}
\usepackage{makecell}
\usepackage{float}
\usepackage{algorithm}
\usepackage{algpseudocode}
\usepackage{chemformula}
\usepackage{xcolor}
\usepackage{enumerate}
\usepackage{lipsum}

\ifpdf
  \DeclareGraphicsExtensions{.eps,.pdf,.png,.jpg}
\else
  \DeclareGraphicsExtensions{.eps}
\fi

\newcommand{\bbR}{\mathbb{R}}

\newcommand{\dd}{\mathrm{d}}

\allowdisplaybreaks[4]
\newsiamthm{assumption}{Assumption}
\newsiamremark{remark}{Remark}
\newsiamremark{hypothesis}{Hypothesis}
\crefname{hypothesis}{Hypothesis}{Hypotheses}
\newsiamthm{claim}{Claim}
\newsiamremark{fact}{Fact}
\crefname{fact}{Fact}{Facts}

\headers{Generalized multi-species collision operators}{Y. Zhao, A. J. Christlieb and H. Lei}

\title{Learning a general class of admissible multi-species collision operators from molecular dynamics
\thanks{Submitted to the editors DATE.
\funding{This work was funded in part by the National Science Foundation under Grant DMS-2143739, the Department of Energy under Grant DOE-DESC0023164 and the ACCESS program through allocation MTH210005.}
}
}

\author{
Yue Zhao\thanks{
Department of Computational Mathematics, Science \& Engineering,
Michigan State University, 428 S Shaw Ln, East Lansing, MI 48824, USA
(\email{zhaoyu14@msu.edu}).
}
\and
Andrew J. Christlieb \thanks{
Department of Computational Mathematics, Science \& Engineering,
Michigan State University, 428 S Shaw Ln, East Lansing, MI 48824, USA;
Department of Mathematics,
Michigan State University, 619 Red Cedar Road, 
East Lansing, MI 48824, USA
(\email{christli@msu.edu}).
}
\and
Huan Lei\thanks{
Department of Computational Mathematics, Science \& Engineering,
Michigan State University, 428 S Shaw Ln, East Lansing, MI 48824, USA;
Department of Statistics \& Probability,
Michigan State University, 619 Red Cedar Road, 
East Lansing, MI 48824, USA
(\email{leihuan@msu.edu}).
}
}

\ifpdf
\hypersetup{
  pdftitle={
  Learning a general class of admissible multi-species collision operators from molecular dynamics
  },
  pdfauthor={Y. Zhao, A. Christlieb and H. Lei}
}
\fi

\begin{document}

\maketitle

\begin{abstract}
We develop a structure-preserving, data-driven collision operator for spatially homogeneous multi-species kinetic systems from molecular dynamics (MD).
The operator consists of diagonal self-collision blocks and ordered off-diagonal cross-species blocks to describe intra- and inter-species momentum and energy exchange. 
Within a local and point-wise identifiable kernel class, we develop the necessary and sufficient condition for the admissible kernel class satisfying the conservation laws, the H-theorem, and the frame indifference. 
Unlike the classical Landau operator, the off-diagonal kernels are not restricted to be symmetric under permutation of the two velocity variables. 
This unique structural freedom captures the distinct responses of different species to unresolved correlations and many-body effects arising from micro-scale particle interactions.
The equivalent parametrizable kernel formalization enables us to learn a generalized data-driven collision operator directly from MD, where the low-rank tensor representations and random sampling are used to achieve efficient kernel training and numerical simulation.
Numerical experiments show that the learned operator accurately predicts transport coefficients and the non-equilibrium relaxation, while retaining discrete conservation and entropy production. In particular, 
it captures plasma kinetics in the moderately coupled regime, where the predictions of both the Landau and the data-driven model restricted to velocity-permutation symmetry show significant discrepancies.
\end{abstract}

\begin{keywords}
Kinetic equation, collision operator, multi-species, data-driven modeling, structure-preserving, molecular dynamics
\end{keywords}

\begin{MSCcodes}
82C40, 82D10, 65M70, 82M37
\end{MSCcodes}

\section{Introduction}

The collision operator provides a canonical form for the kinetic description of the irreversible dissipation of plasmas, rarefied gases, and other many-particle systems. 
By encoding the effects of unresolved momentum and energy exchange among particles, the collision operator determines the relaxation and equilibration of the one-particle distribution function under conservation laws and physical constraints. 
Several classical operator forms, including the Landau \cite{landau1937kinetic}, the Boltzmann \cite{boltzmann1872weitere, wild1951boltzmann}, the Balescu-Lenard operator \cite{lenard1960bogoliubov, balescu1960irreversible}, as well as the Rosenbluth-Fokker-Planck formulation \cite{rosenbluth1957fokker}, 
have been proposed based on various approximations of the micro-scale particle correlations and interactions.

For plasmas with long-range Coulomb interactions, the Landau operator has been broadly used for systems in the weakly coupled regime. 
It can be derived from the Boltzmann operator in the grazing-collision limit \cite{alexandre2004landau} or from the Balescu-Lenard operator under suitable weak-screening assumptions. 
Extensive studies have been devoted to developing efficient and structure-preserving numerical methods for the one-species Landau equation, such as entropy-based discretizations \cite{degond1994entropy,buet1998conservative,buet1999numerical}, fast solvers based on Fourier spectral methods \cite{pareschi2000fast,filbet2002numerical,zhang2017conservative}, Hermite expansions \cite{li2020approximation,li2021hermite}, multigrid and multipole methods \cite{buet1997fast,lemou1998multipole}, and Rosenbluth-potential solvers \cite{chacon2000implicit}. 
Complementary particle methods include Monte Carlo binary collision \cite{takizuka1977binary,nanbu1997theory, nanbu1998weighted, caflisch2008hybrid} and regularized variational form \cite{carrillo2020particle,carrillo2021random,bailo2024collisional} have been proposed.

Beyond the one-species setting, multi-species kinetic models are essential for modeling momentum and energy exchange among distinct particle populations in systems such as dilute gas mixtures \cite{andries2002consistent}, strongly coupled ionic mixtures \cite{Sprenkle_Murillo_Nature_Comm_2022}, and fuel plasmas in inertial confinement fusion \cite{AmendtEtAl2010}. 
The multi-species Landau operator models the inter-species interactions in a pairwise form where the off-diagonal blocks share the same prescribed kernel formula as the one-species operator, using the corresponding pairwise coefficients. 
Efficient energy-conserving schemes \cite{taitano2015mass,taitano2016adaptive,hager2016fully}
and particle methods \cite{carrillo2024particle} have been developed. In addition, simplified collision models based on the BGK \cite{bhatnagar1954model}, Lenard–Bernstein \cite{lenard1958plasma}, and Dougherty \cite{dougherty1964model} formulations replace the full nonlocal collision operator with relaxation or drift–diffusion closures. Their multi-species extensions \cite{andries2002consistent,haack2017conservative,dougherty1967model,habbershaw2025nonlinear} incorporate cross-species momentum and energy exchange while enforcing conservation laws and entropy production.

Despite their broad use, classical multi-species collision models remain largely based on weak-coupling assumptions, including independent binary interactions, negligible particle correlations, and isotropic energy transfer. 
These assumptions are restrictive in realistic applications, such as inertial confinement fusion \cite{Rinderknecht_kinetic_plasma_reivew_2018, Indirect_drive_ICF_PRL_1_2024}, dense \cite{Dharma_Perrot_PRE_1998} and ultra-cold plasma \cite{Killian_Science_2007, Bergeson_Murillo_PoP_2025}, where the screening, local structure, and many-body correlations can substantially affect collisional relaxation and transport. 
Recent data-driven approaches \cite{zhao2025data, zhao2026fast, zhao2026molecular} learn a generalized collision operator for one-species plasma systems directly from micro-scale molecular dynamics (MD) simulations for one-species plasma across weakly and moderately coupled regimes. 
The constructed collision kernel takes an anisotropic and non-stationary form and captures the heterogeneous interactions arising from the micro-scale particle correlations and many-body effects. However, for multi-species plasmas, constructing such an operator raises a structural question absent from the one-species setting. Within a pairwise quadratic collision framework, the central issue is to identify the general admissible class of cross-species kernels consistent with reciprocal inter-species exchange, the conservation laws, the H-theorem, and frame indifference. Characterizing this admissible kernel class remains largely open.

In this work, we address this structural problem by characterizing a general admissible class of collision kernels for multi-species kinetic systems within a pairwise quadratic collision framework and directly learning kernels in this class from microscale MD simulations.
The operator is represented by a quadratic bracket within a metriplectic formulation \cite{morrison1984bracket, morrison1986paradigm}, where the collision kernel takes a pairwise block structure to model both the intra- and inter-species interactions. We then characterize the general admissible kernel class by establishing necessary and sufficient conditions
for the conservation of the species-mass, total-momentum, and kinetic energy, the H-theorem, the common-Maxwellian stationarity, and the frame-indifference. 
Crucially, the off-diagonal kernel takes a more general form than the diagonal kernel in the sense that the off-diagonal kernel does not need to satisfy the velocity permutation (i.e., exchange) symmetry between the two colliding particles. 
This general structural freedom can neither be captured by the classical Landau \cite{landau1937kinetic} nor by the one-species data-driven operator \cite{zhao2025data,zhao2026fast}. 
It is a unique feature of the present model, which allows the cross-species kernels to represent distinct responses of different species to unresolved particle correlations and many-body effects.
With the admissible kernel class, we derive an equivalent parameterizable representation and further introduce low-rank separated representations, which restore the convolution structure and can be efficiently learned from MD simulations.

The constructed data-driven collision operator, named DDCO, strictly preserves the physical constraints at the continuum level and therefore enables structure-preserving discretizations of the kinetic equation. We validate the learned operator in both near-equilibrium and nonequilibrium regimes. The predicted transport coefficients agree well with MD estimates, which verifies the accuracy of the present model for the linear response in the near-equilibrium regime.  The model also accurately predicts the non-equilibrium kinetic relaxation of the individual species in comparison with the MD results.  To examine the role of the generalized cross-species structure identified above, we further compare with a data-driven operator whose off-diagonal kernels are constrained to the velocity-permutation symmetry. The resulting predictions show significant discrepancies, which validates the present generalized inter-species collision form with the structural freedom of permutation symmetry.

The rest of this paper is organized as follows.
\Cref{sec:method} presents the generalized multi-species collision operator, including the necessary and sufficient conditions for the admissible kernel class and the equivalent parameterizable form, the low-rank kernel representation, the MD-based learning procedure, the structure-preserving discretization, and the transport formulation. 
\Cref{sec:numerical} reports the numerical results of transport coefficients and the non-equilibrium kinetic relaxation predicted from the present data-driven multi-species collision operator and the full MD results. Summaries and discussion are presented in \Cref{sec:summary}.

\section{Methods}\label{sec:method}

\subsection{Physical and normalized 0D-3V formulations}

Consider a spatially homogeneous plasma mixture of $S$ species.
Species $s$ has velocity distribution $f_{s}(\bm{v}, t)$, number density $n_{s}$, mass $m_{s}$, and charge $q_{s}$.
We separate the physical number density from the normalized velocity probability density function (PDF) $\eta_{s}$:
\begin{equation}\label{eq:physical-normalization}
	f_{s} = n_{s} \eta_{s}, \quad \int_{\bbR^{3}} \eta_{s}(\bm{v}, t) \dd \bm{v} = 1.
\end{equation}
Throughout the paper, a prime denotes evaluation at the second velocity: $f_{t}' = f_{t}(\bm{v}', t)$ and $\nabla' = \nabla_{\bm{v}'}$.
The physical collision equation is
\begin{subequations}\label{eq:dimensional-f-system}
\begin{align}
	\partial_{t} f_{s} & = \sum_{t = 1}^{S} Q_{st}[f_{s}, f_{t}], \label{eq:dimensional-f-equation}\\
	Q_{st}[f_{s}, f_{t}] & = \nabla_{\bm{v}} \cdot \int_{\bbR^{3}} \dfrac{\bm{\omega}_{st}}{m_{s}} \left[ \dfrac{f_{t}'}{m_{s}} \nabla_{\bm{v}} f_{s} - \dfrac{f_{s}}{m_{t}} \nabla_{\bm{v}'} f_{t}' \right] \dd \bm{v}'. \label{eq:dimensional-f-block}
\end{align}
\end{subequations}

\paragraph{Nondimensionalization}
We introduce the dimensionless variables 
\begin{equation}\label{eq:nondimensional-variables}
\begin{aligned}
	\widetilde{\bm{v}}& = \bm{v}/V_{0}, & \widetilde t& = t/t_{0}, &  \\
	\widetilde \eta_{s}(\widetilde{\bm{v}}, \widetilde t) & = V_{0}^{3} \eta_{s}(\bm{v}, t), & \widetilde{\bm{\omega}}_{st}& = \bm{\omega}_{st}/\omega_{0},
\end{aligned}
\end{equation}
where  $m_0$,  $t_0$, $V_0$, and $\omega_{0} = m_{0}^{2} V_{0}^{2}/n_{0} t_{0}$ are the characteristic scales of the mass, time, velocity, and the collision kernel.
In the remaining of the paper, we suppress all tildes and the dimensionless equation takes the form
\begin{subequations}\label{eq:multi-kinetic}
\begin{align}
	\partial_{t} \eta_{s} & = \sum_{t = 1}^{S} n_{t} \mathcal C_{st}[\eta_{s}, \eta_{t}], \label{eq:eta-kinetic-equation}\\
	\mathcal C_{st}[\eta_{s}, \eta_{t}] & = \nabla_{\bm{v}} \cdot \int_{\bbR^{3}} \dfrac{\bm{\omega}_{st}(\bm{v}, \bm{v}')}{m_{s}} \left[ \dfrac{\eta_{t}'}{m_{s}} \nabla_{\bm{v}} \eta_{s} - \dfrac{\eta_{s}}{m_{t}} \nabla_{\bm{v}'} \eta_{t}' \right] \dd \bm{v}', \label{eq:collision-block}
\end{align}
\end{subequations}
where $n_t$ and $m_s$ are constant and the kernel $\bm{\omega}_{st}(\bm{v}, \bm{v}')$ depends on the physical conditions of the multi-species plasma system. Rather than taking empirical forms such as the Landau model \cite{landau1937kinetic}, we aim to learn a generalized kernel $\bm{\omega}_{st}(\bm{v}, \bm{v}')$ directly from micro-scale molecular dynamics. 

\subsection{Multi-species generalized collision kernel}

To impose Galilean invariance in the kinetic model \cref{eq:multi-kinetic}, all collision blocks are evaluated using the same barycentric velocities.
Specifically, the species and barycentric velocities are
\begin{equation}\label{eq:common-mixture-velocity}
	\overline{\bm{u}} = \dfrac{\sum_{s = 1}^{S} m_{s} n_{s} \bm{u}_{s}}{\sum_{s = 1}^{S} m_{s} n_{s}}, \quad \bm{u}_{s} = \int_{\bbR^{3}} \bm{v} \eta_{s}(\bm{v}) \dd \bm{v} .
\end{equation}
Individual block $\bm\omega_{st}(\boldsymbol v, \boldsymbol v')$ is evaluated with the same peculiar velocities
\begin{equation}\label{eq:common-centered-velocities}
	\bm{c} = \bm{v} - \overline{\bm{u}}, \quad \bm{c}' = \bm{v}' - \overline{\bm{u}},
\end{equation}
where $\bm{c}$ and $\bm{c}'$ are invariant since $\bm v$ and $\overline{\bm u}$ acquire the same translation. 
Therefore, we further rewrite $\bm\omega_{st}(\boldsymbol v, \boldsymbol v')$ 
as 
$\bm\omega_{st}(\boldsymbol c, \boldsymbol c')$.

\subsubsection{Admissible kernel class}
To construct $\bm\omega_{st}(\boldsymbol c, \boldsymbol c')$, model \cref{eq:multi-kinetic} needs to strictly preserve physical constraints such as the conservation laws and the H-theorem, as well as the frame-indifferent symmetries. This motivates us to identify the admissible kernel class satisfying these constraints. 
In particular, we assume that the species distributions $\eta_s$ are sufficiently regular and decaying in velocity space, and that the admissible kernel representation is locally identifiable in the sense that localized weak identities holding for all admissible species distributions determine the corresponding kernel blocks almost everywhere.

\begin{proposition}[Necessary and sufficient kernel conditions]\label{prop:continuous-structure}
Within the admissible kernel class above, 
the following statements are equivalent.

\noindent
\textnormal{(i)}
For every admissible state, the collision operator preserves the conservation of species mass, total momentum and total kinetic energy
\begin{equation}
	\mathcal{M}_{s} = m_{s} n_{s} \int \eta_{s} \dd \bm{v} , \quad
    \bm{P} = \sum_{s} m_{s} n_{s} \int \bm{v} \eta_{s} \dd \bm{v} , \quad
    \mathcal{E} = \sum_{s} m_{s} n_{s} \int \dfrac{|\bm{v}|^{2}}{2} \eta_{s} \dd \bm{v} ,
\label{eq:mass_momentum_energy}
\end{equation}
non-negative entropy production 
\begin{equation}\label{eq:physical-entropy}
	\mathcal{S}[\bm{\eta}] = - \sum_{s=1}^{S}n_{s} \int_{\bbR^{3}} \eta_{s} \log \eta_{s} \dd \bm{v}, \qquad \dot{\mathcal{S}} \ge 0,
\end{equation}
the normalized Maxwellians
\begin{equation}\label{eq:common-maxwellian}
	\eta^M_{s}(\bm{v}) = \left(\dfrac{m_{s}}{2\pi k_{B}T}\right)^{3/2} \exp\left( -\dfrac{m_{s}|\bm{v}-\overline{\bm{u}}|^{2}}{2k_{B}T} \right)
\end{equation}
as stationary state, and frame-indifferent evolution in $O(3)$.

\noindent
\textnormal{(ii)}
For almost every $(\bm{c},\bm{c}')$, the kernel satisfies
\begin{subequations}\label{eq:kernel-conditions}
\begin{align}
	\bm{\omega}_{st}(\bm{c}, \bm{c}') & = \bm{\omega}_{st}(\bm{c}, \bm{c}')^{T} \succeq 0, \label{eq:pos-semidef}\\
	\bm{\omega}_{st}(\bm{c}, \bm{c}')(\bm{c} - \bm{c}')& = \bm{0}, \label{eq:orthogonality}\\
	\bm{\omega}_{st}(\bm{c}, \bm{c}') & = \bm{\omega}_{ts}(\bm{c}', \bm{c}), \label{eq:reciprocity}\\
    \bm{\omega}_{st}(\bm{U} \bm{c}, \bm{U} \bm{c}') & = \bm{U} \bm{\omega}_{st}(\bm{c}, \bm{c}') \bm{U}^{T}, \quad \bm{U}\in O(3),
    \label{eq:rotational-symmetry}
\end{align}
\end{subequations}

\end{proposition}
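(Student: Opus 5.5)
The plan is to prove (ii)$\Rightarrow$(i) by direct weak-form computation, and (i)$\Rightarrow$(ii) by localizing the same identities using the local identifiability built into the admissible kernel class. The basic tool is the weak form of \cref{eq:multi-kinetic} tested against species-dependent functions $\phi_s(\bm v)$. Integrating by parts in $\bm v$ gives
\begin{equation*}
	\sum_{s} m_s n_s\int \phi_s\,\partial_t\eta_s \dd\bm v = -\sum_{s,t} n_s n_t \iint \nabla\phi_s\cdot \bm\omega_{st}(\bm c,\bm c')\Big[\tfrac{\nabla\eta_s}{m_s}\eta_t' - \eta_s\tfrac{\nabla'\eta_t'}{m_t}\Big]\dd\bm v\dd\bm v' .
\end{equation*}
Here the factor $m_s$ cancels one $1/m_s$ in \cref{eq:collision-block}. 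Relabeling $(s,\bm v)\leftrightarrow(t,\bm v')$ and using \cref{eq:reciprocity}, this becomes the symmetrized bracket
\begin{equation*}
	-\tfrac12\sum_{s,t} n_sn_t\iint \eta_s\eta_t' \,\big(\nabla\phi_s-\nabla'\phi_t'\big)\cdot\bm\omega_{st}\Big(\tfrac{\nabla\log\eta_s}{m_s}-\tfrac{\nabla'\log\eta_t'}{m_t}\Big)\dd\bm v\dd\bm v' .
\end{equation*}
For this step $\bm\omega_{st}$ must be symmetric, and one must check that $\overline{\bm u}$ is unchanged by the evolution; the latter follows from momentum conservation.

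\textbf{Sufficiency.} Take $\phi_s=1$ for mass. Take $\phi_s=\bm v_i$ for momentum; its gradient difference vanishes. Take $\phi_s=|\bm v|^2/2$ for energy; its gradient difference is $\bm c-\bm c'$, which \cref{eq:orthogonality} kills. For entropy, $-n_s\int\eta_s\log\eta_s$ corresponds to $\phi_s=-\log\eta_s/m_s$, which gives $\dot{\mathcal S}=\tfrac12\sum n_sn_t\iint\eta_s\eta_t'\,\bm w\cdot\bm\omega_{st}\bm w\ge0$ by \cref{eq:pos-semidef}, where $\bm w=\nabla\log\eta_s/m_s-\nabla'\log\eta_t'/m_t$. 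At the common Maxwellian \cref{eq:common-maxwellian}, $\bm w=-(\bm c-\bm c')/k_BT$. Then the integrand of $\mathcal C_{st}$ itself vanishes pointwise by \cref{eq:orthogonality}, so $\eta^M$ is stationary. Frame indifference follows from \cref{eq:rotational-symmetry} by the change of variables $\bm v'\mapsto \bm U\bm v'$, using that $\overline{\bm u}$ rotates covariantly.

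\textbf{Necessity.} Here the local identifiability is essential. From stationarity of all common Maxwellians (arbitrary $T$, $\overline{\bm u}$), I will deduce $\int\bm\omega_{st}(\bm c,\bm c')(\bm c-\bm c')\eta^M_t(\bm c')\dd\bm c'$-type identities. Combined with energy conservation for arbitrary perturbations $\eta_s$ localized near a point pair, these should force $\bm\omega_{st}(\bm c-\bm c')=0$ almost everywhere. More directly, energy conservation for all $\eta$ gives a weak identity $\iint \eta_s\eta_t'(\bm c-\bm c')\cdot\bm\omega_{st}\bm g=0$ for arbitrary gradients $\bm g$, and localization then yields \cref{eq:orthogonality} and its transpose. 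Reciprocity \cref{eq:reciprocity} comes from momentum conservation: the unsymmetrized cross terms $s\neq t$ must cancel for arbitrary localized $\eta_s,\eta_t$, and since these act on independent bumps, the only way is $\bm\omega_{st}(\bm c,\bm c')=\bm\omega_{ts}(\bm c',\bm c)$. Symmetry and positivity come from the H-theorem. The entropy production is a quadratic form in $\bm w$, and only the symmetric part of $\bm\omega_{st}$ contributes to it. To show the antisymmetric part vanishes, I would rely on the identifiability assumption: the antisymmetric part yields an operator term not representable in symmetric form, or I would argue via stationarity plus mass conservation. Then choosing $\eta$ concentrated near $(\bm c_0,\bm c_0')$ with prescribed $\bm w$ forces $\bm w^T\bm\omega_{st}\bm w\ge0$. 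Finally, frame indifference for all states, localized, gives \cref{eq:rotational-symmetry}.

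The main obstacle is necessity, specifically constructing admissible distributions whose weak identities localize to pointwise kernel statements. The hard cases are decoupling $\bm\omega_{st}$ from $\bm\omega_{ts}$ and forcing symmetry from entropy production alone. My first attempt will be to perturb a common Maxwellian by $\epsilon\,\psi_s$ with small bumps $\psi_s$ and expand $\dot{\mathcal S}$ to second order. That expansion yields a quadratic form in which arbitrary bump locations and gradient directions isolate $\bm\omega_{st}(\bm c_0,\bm c_0')$. Any remaining gaps, notably symmetry, will be closed by invoking the stated local identifiability hypothesis.
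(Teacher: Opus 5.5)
There is a genuine gap at exactly the step you defer: the symmetry $\bm{\omega}_{st}=\bm{\omega}_{st}^{T}$ in the necessity direction. Symmetry cannot come from the H-theorem, from stationarity plus mass conservation, or from local identifiability. Every constraint in (i) except invariance under reflections is blind to a suitable antisymmetric part. (Your symmetrized bracket also needs only reciprocity, not symmetry; the relabeling uses just $\bm{\omega}_{st}(\bm c,\bm c')=\bm{\omega}_{ts}(\bm c',\bm c)$.) Take
\[
\bm{\omega}_{st}(\bm c,\bm c')=\bm{S}_{st}(\bm c,\bm c')+\alpha_{st}(\bm c,\bm c')\,[\bm u]_{\times},\qquad [\bm u]_{\times}\bm x=\bm u\times\bm x,\quad \bm u=\bm c-\bm c',
\]
where $\bm S_{st}$ satisfies (ii), $\alpha_{st}$ is a function of $|\bm c|$, $|\bm c'|$, $\bm c\cdot\bm c'$, and $\alpha_{ts}(\bm c',\bm c)=-\alpha_{st}(\bm c,\bm c')$. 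Then $\bm\omega_{st}\bm u=\bm\omega_{st}^{T}\bm u=\bm 0$ and reciprocity holds. Consequently:
\begin{itemize}
\item mass, momentum and energy are conserved;
\item every common Maxwellian is stationary, since the flux vanishes pointwise;
\item the entropy production equals that of $\bm S_{st}$;
\item the kernel is even $SO(3)$-covariant.
\end{itemize}
Yet the antisymmetric part adds $-\tfrac12\sum_{s,t}n_sn_t\iint\eta_s\eta_t'\,\alpha_{st}\,(\nabla\phi_s-\nabla'\phi_t')\cdot(\bm u\times\bm w)$ to your bracket, which is generically nonzero. So identifiability distinguishes this operator from the one with kernel $\bm S_{st}$; it does not exclude it. The only part of (i) this kernel violates is frame indifference under improper $\bm U$, because $[\bm U\bm u]_{\times}=\det(\bm U)\,\bm U[\bm u]_{\times}\bm U^{T}$.

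The missing idea is to obtain symmetry \emph{last}, from the two null conditions combined with the full $O(3)$ covariance \cref{eq:rotational-symmetry}. You already derive that covariance separately by localizing frame indifference.
\begin{itemize}
\item Energy conservation alone gives only the left null condition $\bm\omega_{st}^{T}\bm u=\bm 0$, not ``\cref{eq:orthogonality} and its transpose'' as you state. The right null condition comes from Maxwellian stationarity, or from the left one together with $\operatorname{Sym}\bm\omega_{st}\succeq0$, since $\bm u^{T}\operatorname{Sym}\bm\omega_{st}\,\bm u=\bm u^{T}\bm\omega_{st}\bm u=0$ forces $\operatorname{Sym}\bm\omega_{st}\,\bm u=\bm 0$.
\item Together the two null conditions give $\bm A_{st}\bm u=\bm 0$ for the antisymmetric part $\bm A_{st}$. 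Equivalently, its axial vector $\bm a_{st}$ is parallel to $\bm u$.
\item Covariance gives $\bm a_{st}(\bm U\bm c,\bm U\bm c')=\det(\bm U)\,\bm U\bm a_{st}(\bm c,\bm c')$. Take $\bm U$ to be the reflection across $\operatorname{span}\{\bm c,\bm c'\}$, which fixes $\bm c$ and $\bm c'$. This yields $\bm U\bm a_{st}=-\bm a_{st}$, so $\bm a_{st}\parallel\bm c\times\bm c'$.
\item A vector lying in that plane and normal to it vanishes whenever $\bm c,\bm c'$ are independent, hence almost everywhere. Then $\operatorname{Sym}\bm\omega_{st}\succeq0$ from your entropy localization becomes \cref{eq:pos-semidef}.
\end{itemize}
This is how the paper closes the argument. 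The rest of your plan follows the paper's route: the weak form and test functions for sufficiency, reciprocity from pairwise momentum balance, and localization of entropy production and of frame indifference.
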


\begin{proof} Let us assume condition \textnormal{(ii)} holds. For arbitrary smooth test functions $\varphi_{s}$, integration by parts and the reciprocal pairing $(s,t,\bm{v},\bm{v}')\leftrightarrow(t,s,\bm{v}',\bm{v})$ give
\begin{equation}
\label{eq:continuous-weak-identity}
\begin{split}
&\sum_{s=1}^{S}n_{s}\int\varphi_{s}\partial_{t}\eta_{s}\dd\bm{v} 
= - \sum_{s,t=1}^{S}n_{s}n_{t} \iint \left(\dfrac{\nabla_{\bm{v}} \varphi_s}{m_{s}}\right)^{T} \bm{\omega}_{st}(\bm{c}, \bm{c}') \left[ \dfrac{\eta_{t}'}{m_{s}} \nabla_{\bm{v}} \eta_{s} - \dfrac{\eta_{s}}{m_{t}} \nabla_{\bm{v}'} \eta_{t}' \right]
\dd\bm{v}\dd\bm{v}' \\
&= - \frac{1}{2} \sum_{s,t=1}^{S}n_{s}n_{t} \iint \left[ \dfrac{\nabla_{\bm{v}} \varphi_s}{m_{s}} - \dfrac{\nabla_{\bm{v}'} {\varphi_t}'}{m_{t}} \right]^{T}  \bm{\omega}_{st}(\bm{c}, \bm{c}') \left[ \dfrac{\eta_{t}'}{m_{s}} \nabla_{\bm{v}} \eta_{s} - \dfrac{\eta_{s}}{m_{t}} \nabla_{\bm{v}'} \eta_{t}' \right]
\dd\bm{v}\dd\bm{v}' \\
&=- \frac{1}{2} \sum_{s,t=1}^{S}n_{s}n_{t} \iint \eta_{s}\eta_{t}' \bm{D}_{st}[\varphi]^{T} \bm{\omega}_{st} \bm{D}_{st}[\log\eta] \dd\bm{v}\dd\bm{v}',
\end{split}
\end{equation}
where $\bm{D}_{st}[h] = \nabla h_{s}/m_{s} - \nabla' h_{t}'/m_{t}$ and the second identity use condition \cref{eq:reciprocity}.

In particular, by choosing  $\varphi_{s}=m_{s} \delta_{sr}$, $m_{s}\bm v$, and $m_{s}|\bm{v}|^{2}/2$, $\bm{D}_{st}[\varphi_{s}]$ yields $\bm{0}$, $\bm{0}$, and $\bm{c}-\bm{c}'$, respectively, which prove the conservation of species $r$ mass, total momentum, and total kinetic energy defined in Eq. \cref{eq:mass_momentum_energy}. 
Furthermore, by taking $\varphi_{s} = -(1+\log\eta_{s})$, we have Eq. \cref{eq:physical-entropy}. 
Finally, using $\bm{D}_{st}[\log \eta^M] = -\frac{\bm{c}-\bm{c}'}{k_{B}T}$ and the null condition \cref{eq:orthogonality}, we can show Eq. \cref{eq:common-maxwellian} is a stationary state.

For an orthogonal matrix $\bm{U}$, define $\eta_{s}^{\bm{U}}(\bm{v}) = \eta_{s}(\bm{U}^{T}\bm{v})$.
\Cref{eq:rotational-symmetry} and an orthogonal change of variables give
\begin{equation}\label{eq:operator-frame-indifference}
    \mathcal{C}_{st}[\eta_{s}^{\bm{U}}, \eta_{t}^{\bm{U}}](\bm{v}) = \mathcal{C}_{st}[\eta_{s}, \eta_{t}](\bm{U}^{T} \bm{v}) ,
\end{equation}
which proves the frame-indifference of model \cref{eq:multi-kinetic} in $O(3)$. 

Conversely, we assume condition \textnormal{(i)} holds. The pairwise momentum evolution gives
\[
\dot{\bm P}_{st} + \dot{\bm P}_{ts} = n_s n_t \iint \left[ \bm{\omega}_{ts}(\bm c',\bm c) - \bm{\omega}_{st}(\bm c,\bm c') \right] \bm{D}_{st}[\log \eta]\, \eta_s \eta_t' \, \dd \bm{v} \, \dd \bm{v}'.
\]
The conservation of the total momentum for arbitrary localized \(\eta_s,\eta_t\) requires the conservation of the pairwise momentum and therefore condition \cref{eq:reciprocity}. 

Similarly, we consider the pairwise energy exchange
\[
\dot{\mathcal{E}}_{st} + \dot{\mathcal{E}}_{ts} = - n_s n_t \iint (\boldsymbol{c} - \boldsymbol{c}')^T
\bm{\omega}_{st}(\bm{c},\bm{c}') \bm{D}_{st}[\log \eta]\, \eta_s \eta_t' \, \dd \bm{v} \, \dd \bm{v}'.
\]
The total energy conservation for 
arbitrary localized \(\eta_s,\eta_t\) requires $\dot{\mathcal{E}}_{st}
+
\dot{\mathcal{E}}_{ts} = 0$, which yields the left-null condition
\begin{equation}\label{eq:left-null-condition}
	\bm{\omega}_{st}(\bm{c},\bm{c}')^{T} (\bm{c}-\bm{c}') = \bm{0}.
\end{equation}
By Eq. \cref{eq:continuous-weak-identity}, the common-Maxwellian stationarity gives the right-null condition \cref{eq:orthogonality} and 
entropy monotonicity gives
\begin{equation}\label{eq:symmetric-part-positive}
	\operatorname{Sym}\bm{\omega}_{st} = \frac{\bm{\omega}_{st}+\bm{\omega}_{st}^{T}}{2} \succeq0.
\end{equation}
Furthermore, by $\nabla_{\bm v} \eta_{s}^{\bm{U}}(\bm{v}) = \bm U \nabla \eta_s(\bm{U}^{T} \bm v)$, 
frame indifference of model \cref{eq:multi-kinetic} gives condition \cref{eq:rotational-symmetry}.

Finally, we show that $\bm{\omega}_{st} = \bm{\omega}_{st}^T$.  Write $\bm{\omega}_{st}=\bm{S}_{st}+\bm{A}_{st}$ with $\bm{S}_{st}^{T}=\bm{S}_{st}$ and $\bm{A}_{st}^{T}=-\bm{A}_{st}$.
The two null conditions imply $\bm{A}_{st}(\bm{c}-\bm{c}')=\bm{0}$.
In three dimensions, there is a unique axial vector $\bm{a}_{st}$ such that $\bm{A}_{st}\bm{x}=\bm{a}_{st}\times\bm{x}$.
Under $\bm{U}\in O(3)$ it transforms as $\bm{a}_{st}(\bm{U}\bm{c},\bm{U}\bm{c}') = \det(\bm{U}) \bm{U} \bm{a}_{st} (\bm{c},\bm{c}')$.
Reflection across $\operatorname{span}\{\bm{c},\bm{c}'\}$ leaves $\bm{c}$ and $\bm{c}'$ unchanged and gives $\bm{a}_{st}=\alpha_{st}\bm{c}\times\bm{c}'$.
In contrast, the null condition also makes $\bm{a}_{st}$ parallel to $\bm{c}-\bm{c}'$, so $\bm{a}_{st}=\bm{0}$ almost everywhere.
Thus $\bm{\omega}_{st}$ is symmetric and \cref{eq:symmetric-part-positive} reduces to \cref{eq:pos-semidef} and Eq. \cref{eq:left-null-condition} reduces to \cref{eq:orthogonality}.
\end{proof}

\begin{remark}
Furthermore, suppose that the species-pair blocks are connected and 
\begin{equation}\label{eq:strict-tangential-coercivity}
	\operatorname{Null} \left( \bm{\omega}_{st}(\bm{c},\bm{c}') \right) = \operatorname{span}\{\bm{c}-\bm{c}'\}
\end{equation}
for almost every $\bm{c}\neq\bm{c}'$. 
It is straightforward to show model \cref{eq:multi-kinetic} achieves a unique equilibrium state if and only if all species have the common-Maxwellian form \cref{eq:common-maxwellian}, determined by the conserved species densities, total momentum, and total energy.
\end{remark}

Condition \cref{eq:kernel-conditions} identifies a general class of the collision kernel. Compared with the classical Landau model $\bm\omega^L_{st}(\bm c, \bm c') \propto \mathcal{P} \vert \bm c-\bm c'\vert^{-1}$ and $\mathcal{P} = \bm I - \hat{\bm u}\hat{\bm u}^T$, there are two major differences. First, the collision kernel $\bm\omega_{st}(\bm c, \bm c')$ can be generally non-stationary and anisotropic. Second, the inter-species collision kernel does not need to satisfy the permutation symmetry, i.e., for off-diagonal blocks, the admissible class permits 
\begin{equation}
\bm\omega_{st}(\bm c, \bm c') \neq \bm\omega_{st}(\bm c', \bm c), \qquad s\neq t.
\label{eq:permutation_non_symmetry}
\end{equation}
This condition arises from the microscale particle correlations and many-body effects. The collision between the two particles will be further affected by the unresolved environment, where the responses can be different between $(s, t, \bm c, \bm c')$ and $(s, t, \bm c', \bm c)$. 
As shown in \Cref{sec:numerical}, the present generalized collision operator enables us to accurately predict the multi-species plasma kinetics in moderately coupled regimes with non-negligible particle correlations, where the Landau model shows limitations.

\subsubsection{Equivalent parameterizable representation} We next parameterize the admissible kernels satisfying \Cref{prop:continuous-structure} and define
\begin{equation}
	\mathcal{P} = \bm{I} - \bm{u} \bm{u}^{T}/{|\bm{u}|^{2}}, \quad 
    \bm{u} = \bm{c} - \bm{c}' , \quad
    \bm{r} = \bm{c} + \bm{c}' .
\end{equation}

\begin{proposition}[Equivalent kernel formalization]\label{prop:formalization}
Let $\bm{\omega}_{st}$ depend only on $\bm{c}$, $\bm{c}'$, and scalar state variables. 
For almost every pair with $\bm{u}\neq\bm{0}$ and $\mathcal{P} \bm{r} \neq \bm{0}$, the following statements are equivalent.

\noindent
\textnormal{(i)}
The kernel satisfies all conditions in \cref{eq:kernel-conditions}.

\noindent
\textnormal{(ii)}
There exist real scalar encoders $g_{st,1}$ and $g_{st,2}$ such that
\begin{subequations}\label{eq:two-encoder-kernel}
\begin{align}
    \bm{\omega}_{st} &= g_{st,1}^{2}|\mathcal{P}\bm{r}|^{2}\mathcal{P} + \left(g_{st,2}^{2}-g_{st,1}^{2}\right) \mathcal{P}\bm{r}(\mathcal{P}\bm{r})^{T}, \label{eq:encoder-representation} \\
	g_{ts,\ell}(\bm{c}',\bm{c})^{2} &= g_{st,\ell}(\bm{c},\bm{c}')^{2}, \label{eq:encoder-reciprocity} \\
	g_{st,\ell}(\bm{U}\bm{c},\bm{U}\bm{c}')^{2} &= g_{st,\ell}(\bm{c},\bm{c}')^{2}, \quad
	\bm{U}\in O(3), \quad \ell=1,2. \label{eq:encoder-covariance}
\end{align}
\end{subequations}
\end{proposition}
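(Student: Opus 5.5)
The plan is to prove both directions directly, working at a fixed pair $(\bm{c},\bm{c}')$ with $\bm{u}\neq\bm{0}$ and $\mathcal{P}\bm{r}\neq\bm{0}$. Here $\{\hat{\bm{u}},\hat{\bm{e}}_1,\hat{\bm{e}}_2\}$ denotes an orthonormal frame with $\hat{\bm{u}}=\bm{u}/|\bm{u}|$, $\hat{\bm{e}}_1=\mathcal{P}\bm{r}/|\mathcal{P}\bm{r}|$, and $\hat{\bm{e}}_2=\hat{\bm{u}}\times\hat{\bm{e}}_1$. The plane $\operatorname{span}\{\bm{c},\bm{c}'\}=\operatorname{span}\{\hat{\bm{u}},\hat{\bm{e}}_1\}$ is two-dimensional exactly because $\mathcal{P}\bm{r}\neq\bm{0}$.

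\textbf{(ii)$\Rightarrow$(i).} This is a direct check of each condition in \cref{eq:kernel-conditions}.
\begin{itemize}
\item Symmetry of \cref{eq:encoder-representation} is obvious. In the frame above the kernel is diagonal, equal to $\operatorname{diag}(0,\, g_{st,2}^2|\mathcal{P}\bm{r}|^2,\, g_{st,1}^2|\mathcal{P}\bm{r}|^2)$, so positive semidefiniteness and $\bm{\omega}_{st}\bm{u}=\bm{0}$ both follow.
\item For reciprocity, swapping $(\bm{c},\bm{c}')\mapsto(\bm{c}',\bm{c})$ sends $\bm{u}\mapsto-\bm{u}$ and fixes $\bm{r}$. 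Hence $\mathcal{P}$ and $\mathcal{P}\bm{r}$ are invariant, and \cref{eq:encoder-reciprocity} yields \cref{eq:reciprocity}.
\item Under $\bm{U}\in O(3)$ we have $\mathcal{P}\mapsto\bm{U}\mathcal{P}\bm{U}^T$ and $\mathcal{P}\bm{r}\mapsto\bm{U}\mathcal{P}\bm{r}$, with $|\mathcal{P}\bm{r}|$ unchanged. Together with \cref{eq:encoder-covariance} this gives \cref{eq:rotational-symmetry}.
\end{itemize}

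\textbf{(i)$\Rightarrow$(ii).}
\begin{itemize}
\item By \cref{eq:pos-semidef} and \cref{eq:orthogonality}, $\bm{\omega}_{st}$ is symmetric, positive semidefinite, and annihilates $\hat{\bm{u}}$. It therefore acts as a symmetric PSD $2\times2$ matrix on $\operatorname{span}\{\hat{\bm{e}}_1,\hat{\bm{e}}_2\}$.
\item Let $\bm{R}$ be the reflection across $\operatorname{span}\{\bm{c},\bm{c}'\}$, i.e.\ $\bm{R}\hat{\bm{e}}_2=-\hat{\bm{e}}_2$, fixing the other two frame vectors. It fixes $\bm{c}$ and $\bm{c}'$, so \cref{eq:rotational-symmetry} gives $\bm{\omega}_{st}=\bm{R}\bm{\omega}_{st}\bm{R}^T$. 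This forces the off-diagonal entry $\hat{\bm{e}}_1^T\bm{\omega}_{st}\hat{\bm{e}}_2$ to vanish.
\item Hence $\bm{\omega}_{st}=\lambda_1\hat{\bm{e}}_2\hat{\bm{e}}_2^T+\lambda_2\hat{\bm{e}}_1\hat{\bm{e}}_1^T$ with $\lambda_\ell\ge0$. Since $\mathcal{P}=\hat{\bm{e}}_1\hat{\bm{e}}_1^T+\hat{\bm{e}}_2\hat{\bm{e}}_2^T$, this equals $\lambda_1\mathcal{P}+(\lambda_2-\lambda_1)\hat{\bm{e}}_1\hat{\bm{e}}_1^T$.
\item Define $g_{st,\ell}=\sqrt{\lambda_\ell}/|\mathcal{P}\bm{r}|$. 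Substituting gives exactly \cref{eq:encoder-representation}.
\end{itemize}
This reflection step is the crux, mirroring the axial-vector argument in \Cref{prop:continuous-structure}.

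\textbf{Transformation rules for the encoders.} The eigenvalues are intrinsic: $\lambda_2=\hat{\bm{e}}_1^T\bm{\omega}_{st}\hat{\bm{e}}_1$ and $\lambda_1=\operatorname{tr}\bm{\omega}_{st}-\lambda_2$. The frame vectors $\hat{\bm{e}}_1$ and $\mathcal{P}$ are defined covariantly from $(\bm{c},\bm{c}')$.
\begin{itemize}
\item Applying \cref{eq:rotational-symmetry} gives $\lambda_\ell(\bm{U}\bm{c},\bm{U}\bm{c}')=\lambda_\ell(\bm{c},\bm{c}')$. Since $|\mathcal{P}\bm{r}|$ is also invariant, this yields \cref{eq:encoder-covariance}.
\item Applying \cref{eq:reciprocity} with the swap invariance of $\mathcal{P}$ and $\mathcal{P}\bm{r}$ gives $\lambda_\ell^{ts}(\bm{c}',\bm{c})=\lambda_\ell^{st}(\bm{c},\bm{c}')$, hence \cref{eq:encoder-reciprocity}.
\end{itemize}

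\textbf{Main obstacle.} The delicate part is bookkeeping rather than analysis. First, one must ensure the encoders are well-defined measurable scalar functions. Second, because only $g^2$ is determined, the conditions must be stated on squares, which is why \cref{eq:encoder-reciprocity} and \cref{eq:encoder-covariance} involve $g^2$. Third, the excluded set where $\bm{u}=\bm{0}$ or $\bm{c}\parallel\bm{c}'$ (so that $\mathcal{P}\bm{r}=\bm{0}$) has measure zero, so the almost-everywhere statement is unaffected.
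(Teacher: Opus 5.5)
Your proof is correct and follows the paper's route. You reduce $\bm{\omega}_{st}$ to a symmetric positive semidefinite block on the plane orthogonal to $\bm{u}$, eliminate the cross term by an $O(3)$ reflection, identify the two remaining eigenvalues with $g_{st,1}^{2}|\mathcal{P}\bm{r}|^{2}$ and $g_{st,2}^{2}|\mathcal{P}\bm{r}|^{2}$, and transfer reciprocity and rotational symmetry to the encoders. The only difference is which reflection you use. The paper invokes the inversion $\bm{\omega}_{st}(\bm{c},\bm{c}')=\bm{\omega}_{st}(-\bm{c},-\bm{c}')$, which by itself gives only $\lambda_{c}(-\bm{c},-\bm{c}')=-\lambda_{c}(\bm{c},\bm{c}')$ and must be combined with the $\pi$-rotation about $\hat{\bm{h}}$ to conclude $\lambda_{c}=0$. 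Your reflection across $\operatorname{span}\{\bm{c},\bm{c}'\}$ fixes both velocities and kills the cross term in one line.
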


\begin{proof}

Let us assume \textnormal{(i)} holds. The symmetry \cref{eq:pos-semidef} and the null condition \cref{eq:orthogonality} imply $\bm \omega_{st}= \mathcal{P} \bm \omega_{st} \mathcal{P} \succeq 0$. Define $\hat{\bm r} = \mathcal{P} \bm r/ \vert \mathcal{P} \bm r \vert$ and $\hat{\bm h} = \bm \hat{\bm u} \times \hat{\bm r}$, $\bm \omega_{st}$ takes the general form 
\[
\boldsymbol{\omega}(\boldsymbol{c}, \boldsymbol{c}') = \mathcal{P}\left (\lambda_{\parallel} \hat{\boldsymbol r}\hat{\boldsymbol r}^T + \lambda_{c} \left(\hat{\boldsymbol r} \hat{\boldsymbol h}^T +  \hat{\boldsymbol h} \hat{\boldsymbol r}^T\right) + \lambda_{\perp}  \hat{\boldsymbol h}\hat{\boldsymbol h}^T \right) \mathcal{P}, \qquad \begin{pmatrix}
\lambda_{\parallel} &\lambda_{c}\\
\lambda_{c}&\lambda_{\perp}
\end{pmatrix} \succeq 0
\]
Using reflection symmetry $\boldsymbol{\omega}(\boldsymbol{c}, \boldsymbol{c}') = \boldsymbol{\omega}(-\boldsymbol{c}, -\boldsymbol{c}')$, it is easy to show $\lambda_{c} \equiv 0$. 
Therefore, we can re-write $\bm \omega_{st}$ as 
\[
\bm \omega_{st} = \mathcal{P} ( \lambda_{\parallel} \hat{\bm r}\hat{\bm r}^T  + \lambda_{\perp} (\bm I -  \hat{\bm u}\hat{\bm u}^T - \hat{\bm r}\hat{\bm r}^T))\mathcal{P} = \lambda_{\perp}  \mathcal{P} + (\lambda_{\parallel} - \lambda_{\perp}) \hat{\bm r}\hat{\bm r}^T, \quad \lambda_{\parallel}, \lambda_{\perp} \ge 0. 
\]
Taking
\begin{equation}\label{eq:encoder-eigenvalues}
	g_{st,1}^{2} = \lambda_{\perp}/|\mathcal{P}\bm{r}|^{2}, \quad 
    g_{st,2}^{2} = \lambda_{\parallel}/|\mathcal{P}\bm{r}|^{2}
\end{equation}
gives \cref{eq:encoder-representation}.
Kernel reciprocity \cref{eq:reciprocity} and rotational symmetry \cref{eq:rotational-symmetry} require these eigenvalues satisfy \cref{eq:encoder-reciprocity,eq:encoder-covariance}.

Conversely, let us assume \textnormal{(ii)} holds. The eigenvalues condition \cref{eq:two-encoder-kernel} for $\bm \omega_{st}$ are $0$ along $\bm{u}$, $g_{st,2}^{2}|\mathcal{P} \bm{r}|^{2}$ along $\mathcal{P}\bm{r}$, and $g_{st,1}^{2}|\mathcal{P}\bm{r}|^{2}$ in the remaining tangential direction.
Thus the kernel $\bm \omega_{st}$ is symmetric positive semi-definite and perpendicular to $\bm{u}$.
\Cref{eq:encoder-reciprocity,eq:encoder-covariance} give ordered-pair reciprocity condition \cref{eq:reciprocity} and rotational symmetry \cref{eq:rotational-symmetry}, respectively.
\end{proof}

With the non-stationary form of $\bm\omega(\bm c, \bm c')$, direct evaluation of $C_{st}[\eta_s, \eta_t]$ in model \cref{eq:collision-block} loses the convolution structure and is computationally expensive. 
For efficient evaluation, we further represent each pair-ordered encoder function in low-rank spectral separation form.
The separation is imposed on scalar invariants and reciprocal blocks share parameters, so conditions \cref{eq:kernel-conditions} remain exact, i.e., 
\begin{equation}\label{eq:separation}
\begin{aligned}
    g_{ss,\ell}(\bm{c}, \bm{c}') =& \sum_{j=1}^{J_{ss,\ell}} L_{ss,\ell,j}(|\bm{c} - \bm{c}'|) \left[ M_{ss,\ell,j}(|\bm{c}|) N_{ss,\ell,j}(|\bm{c}'|) + N_{ss,\ell,j}(|\bm{c}|) M_{ss,\ell,j}(|\bm{c}'|) \right] , \\
    g_{st,\ell}(\bm{c}, \bm{c}') =& \sum_{j=1}^{J_{st,\ell}} L_{st,\ell,j}(|\bm{c} - \bm{c}'|) M_{st,\ell,j}(|\bm{c}|) N_{st,\ell,j}(|\bm{c}'|) , \quad s \neq t , \\
    g_{ts,\ell}(\bm c,\bm c') =& \sum_{j=1}^{J_{st,\ell}} L_{st,\ell,j}(|\bm c-\bm c'|) M_{st,\ell,j}(|\bm c'|) N_{st,\ell,j}(|\bm c|) , \quad g_{ts,\ell}(\bm c',\bm c) = g_{st,\ell}(\bm c,\bm c')
\end{aligned}
\end{equation}
where $\ell=1,2$, and diagonal blocks are exchange symmetric $g_{ss,\ell} (\bm c, \bm c') = g_{ss,\ell} (\bm c', \bm c)$, while off-diagonal blocks only need to satisfy the reciprocal condition but exchange symmetry does not hold in general, i.e., $g_{st,\ell} (\bm c, \bm c') = g_{ts,\ell} (\bm c', \bm c)$, $g_{st,\ell} (\bm c, \bm c') \neq g_{st,\ell} (\bm c', \bm c)$.

With the separation form \cref{eq:separation} of $\bm\omega_{st}$, block $\mathcal{C}_{st}$ in \cref{eq:collision-block} can be written as a finite sum of 3-dimensional velocity convolutions and can be evaluated by FFT without storing a dense 6-dimensional kernel \cite{zhao2026fast}.
For $N$ total velocity nodes, $S$ species, and a representative separation rank $J$, the resulting cost is $\mathcal{O}(S^{2} J^{2} N \log N)$.

\paragraph{Landau limit}
As a special case, by choosing $g_{st,1}^{2}=g_{st,2}^{2} \propto |\mathcal{P}\bm{r}|^{-2}|\bm{u}|^{-1}$, 
the generalized kernel reduces to the classical Landau form with stationary and exchange-symmetry structure, i.e., 
\begin{equation}
\begin{aligned}
    \bm{\omega}_{st}(\bm c, \bm c') &= \bm{\omega}_{st}(\bm c', \bm c) = \dfrac{\ln\Lambda_{st} q_{s}^{2} q_{t}^{2}} {8 \pi \epsilon_{0}^{2} |\bm{u}|} \mathcal{P}, \quad \Lambda_{st}=\lambda_D/b_{\min}^{st} , \\
    \lambda_{D} &= \left( \dfrac{\epsilon_{0} k_{B} T}{\sum_{s} n_{s} q_{s}^{2}} \right)^{1/2}, ~~\qquad \qquad b_{\min}^{st} = \dfrac{|q_{s} q_{t}|}{4 \pi \epsilon_{0} k_BT}.
\end{aligned}
\label{eq:landau_model}
\end{equation}
where $\ln\Lambda_{st}$ is the Coulomb logarithm.  In the weakly coupled limit, i.e., $\Lambda_{st} \gg 1$, the Landau form provides an accurate prediction and $\bm\omega_{st}$ remains stationary and isotropic on the plane orthogonal to $\bm u$. For a stronger coupled plasma system, $\Lambda_{st} \lesssim 1$ and the Landau form \cref{eq:landau_model} is generally insufficient to model the plasma kinetics. Instead, we aim to learn a generalized operator in the form of \cref{eq:multi-kinetic,eq:two-encoder-kernel,eq:separation} from micro-scale MD specified in the following \Cref{sec:weak_form_learning}.

\subsection{Weak-form learning}
\label{sec:weak_form_learning}

To learn the multi-species collision kernel $\bm\omega(\bm c, \bm c')$, we match the evolution of the particle velocity PDF $\eta_s(\bm v, t)$ predicted by the kinetic model \cref{eq:multi-kinetic} with the full MD simulations. 
In particular, directly matching the strong form requires the numerical evaluation of $\nabla_{\bm v}\eta_s(\bm v, t)$ from the measure-valued $\eta_s$ of the MD particles, which can be computationally intractable. 
Instead, we use a weak-form  \cite{zhao2025data,zhao2026fast} by projecting the prediction residual onto a set of smooth test functions $\psi_{k}(\bm{c})$, see in \Cref{app:md-learning}. 
The training loss is
\begin{equation}\label{eq:loss}
	\mathcal L = \sum_{j, k, s} \left| \left.\dfrac{\dd}{\dd t} \langle \eta_{s}, \psi_{k} \rangle \right|_{t=t_{j}, \rm MD} - \sum_{t = 1}^{S} n_{t} \left\langle \mathcal C_{st}[\eta_{s}^{j}, \eta_{t}^{j}], \psi_{k} \right\rangle \right|^{2} ,
\end{equation}
where $j$ denotes the snapshot time $t_{j}=j\Delta t$.
The barycentric velocity \cref{eq:common-mixture-velocity} is evaluated from the complete MD mixture and is used consistently in each block.

From $N_{s}$ MD particles in species $s$ at time $j$,
\begin{equation}
	\left.\dfrac{\dd}{\dd t} \langle \eta_{s}, \psi_{k} \rangle \right|_{\rm MD} \simeq \dfrac{1}{\Delta t N_{s}} \sum_{i = 1}^{N_{s}} \left[\psi_{k}(\bm{c}_{i}^{j + 1}) - \psi_{k}(\bm{c}_{i}^{j}) \right] ,
\end{equation}
where the empirical measure $\eta_{s}^{j}(\bm{c}) = N_{s}^{-1} \sum_{i=1}^{N_{s}} \delta(\bm{c} - \bm{c}_{i})$ from MD is used.

For a smooth test function $\psi_{k}$, the weak collision block is
\begin{equation}\label{eq:weak-collision-block}
\begin{aligned}
	\left\langle \mathcal C_{st}[\eta_{s}, \eta_{t}], \psi_{k} \right\rangle =& - \iint(\nabla \psi_{k})^{T} \dfrac{\bm{\omega}_{st}}{m_{s}} \left[ \dfrac{\eta_{t}'}{m_{s}} \nabla \eta_{s} - \dfrac{\eta_{s}}{m_{t}} \nabla' \eta_{t}' \right] \dd \bm{v}' \dd \bm{v} \\
    =& \dfrac{1}{m_{s}^{2}} \iint \eta_{s} \eta_{t}' \left[\bm{\omega}_{st}:\nabla \nabla \psi_{k} + (\nabla_{\bm{v}} \cdot \bm{\omega}_{st}) \cdot \nabla \psi_{k} \right] \dd \bm{v}' \dd \bm{v}\\
	& - \dfrac{1}{m_{s} m_{t}} \iint \eta_{s} \eta_{t}' (\nabla_{\bm{v}'}\!\cdot \bm{\omega}_{st}) \cdot \nabla \psi_{k} \dd \bm{v}' \dd \bm{v}.
\end{aligned}
\end{equation}

The integrals are expectations over independent samples $\bm{v} \sim \eta_{s}$ and $\bm{v}' \sim \eta_{t}$.
Since the MD trajectories contain $N_{s} \sim 10^{6}$ particles in each species, the expectations are estimated with random mini-batch sampling \cite{ketkar2017stochastic, li2014efficient, jin2020random, jin2021random}, i.e., 
\begin{equation}
\begin{aligned}
    \iint \eta_{s} \eta_{t}' \bm{\omega}_{st} : \nabla \nabla \psi_{k} \dd \bm{v}' \dd \bm{v} &= \dfrac{1}{N_{s} N_{t}} \sum_{i}^{N_{s}} \sum_{i'}^{N_{t}} \bm{\omega}(\bm{c}_{i}, \bm{c}_{i'} ') : \nabla \nabla \psi_{k}(\bm{c}_{i}), \\
    &\approx \dfrac{1}{P}\sum_{p=1}^{P} \bm{\omega}(\bm{c}_{i(p)}, \bm{c}_{i'(p)} ') : \nabla \nabla \psi_{k}(\bm{c}_{i(p)}),
\end{aligned}
\end{equation}
with pairs $(\bm{c}_{i(p)}, \bm{c}_{i'(p)} ')$ randomly selected from $(\eta_{s}, \eta_{t})$ in MD samples.

At each optimization iteration, one kernel block is differentiated per time while all remaining blocks are frozen. 
The gradients from the diagonal blocks and reciprocal cross-block losses are accumulated, followed
by a single optimizer update for all network parameters.
All encoder functions $L_{\ast,j}(|\bm{u}|)$, $M_{\ast,j}(|\bm{c}|)$, and $N_{\ast,j}(|\bm{c}'|)$: $\mathbb{R} \rightarrow \mathbb{R}$ are parameterized by fully connected neural networks with $5$ layers and $10$ neurons per hidden layer, where system state variables such as $n_{s}$, $T$, and $q_{s}$ are not used as input variables.
The networks are trained using Adam optimizer \cite{Kingma_Ba_Adam_2015} with initial learning rate $10^{-3}$, decay factor $0.99$ every $1000$ steps, batch size $P=10^5$, and $5 \times 10^5$ training steps.

\subsection{Numerical discretizations}
As shown in \Cref{prop:continuous-structure,prop:formalization}, the constructed collision operator $\mathcal{C}_{st}(\eta, \eta)$ strictly preserves the conservation laws and entropy production for the multi-species kinetic model \cref{eq:multi-kinetic} on the continuum level. This enables us to further develop structure-preserving numerical discretizations for model \cref{eq:multi-kinetic}. 
Let $\bm{v}_{\bm{k}} = (k_{1} h_{1}, k_{2} h_{2}, k_{3} h_{3})$ for $\bm{k} \in \mathbb{Z}^{3}$, and let $w_{h}=h_{1}h_{2}h_{3}$.
The centered gradient $\mathrm{G}$ and the divergence $\mathrm{D}$ are
\begin{equation}\label{eq:centered-differences}
	(\mathrm{G} z)_{\bm{k}, \alpha} = \frac{z_{\bm{k} + \bm{e}_{\alpha}} - z_{\bm{k} - \bm{e}_{\alpha}}}{2h_{\alpha}}, \quad
	(\mathrm{D}\bm{F})_{\bm{k}} = \sum_{\alpha=1}^{3} \frac{F_{\bm{k} + \bm{e}_{\alpha}, \alpha} - F_{\bm{k} - \bm{e}_{\alpha},\alpha}}{2h_{\alpha}}.
\end{equation}
For summable sequences with no boundary contribution, index shifts give
\begin{equation}\label{eq:centered-sbp}
	w_{h} \sum_{\bm{k}} z_{\bm{k}}(\mathrm{D}\bm{F})_{\bm{k}} = - w_{h} \sum_{\bm{k}}(\mathrm{G} z)_{\bm{k}}^{T}\bm{F}_{\bm{k}}.
\end{equation}
Define
\begin{equation}\label{eq:discrete-pair-difference}
	\bm{\Delta}[z]_{st, \bm{k}\bm{l}} = \frac{(\mathrm{G} z_{s})_{\bm{k}}}{m_{s}} - \frac{(\mathrm{G} z_{t})_{\bm{l}}}{m_{t}},
\end{equation}
and the centered-difference flux scheme
\begin{subequations}\label{eq:centered-flux-scheme}
\begin{align}
	\dot{\eta}_{s, \bm{k}} &= \sum_{t=1}^{S} n_{t}(\mathrm{D} \bm{p}_{st})_{\bm{k}} , \label{eq:centered-semidiscrete}\\
	\bm{p}_{st, \bm{k}} &= w_{h} \sum_{\bm{l}} \frac{\bm{\omega}_{st,\bm{k}\bm{l}}}{m_{s}} \eta_{s, \bm{k}}\eta_{t,\bm{l}} \bm{\Delta}[\log\eta]_{st,\bm{k}\bm{l}}. \label{eq:centered-flux}
\end{align}
\end{subequations}

\begin{proposition}[Forward-Euler conservation and semi-discrete entropy production] \label{prop:discrete-structure}
For the unbounded grid, we assume that the collision fluxes decay sufficiently fast, and $\eta_{s,\bm{k}}>0$.
The discrete collision kernel satisfies
\begin{equation}\label{eq:discrete-kernel-conditions}
	\bm{\omega}_{st, \bm{k}\bm{l}} = \bm{\omega}_{st, \bm{k}\bm{l}}^{T}\succeq 0, \quad
	\bm{\omega}_{st, \bm{k}\bm{l}} = \bm{\omega}_{ts, \bm{l}\bm{k}}, \quad
    \bm{\omega}_{st, \bm{k}\bm{l}} (\bm{v}_{\bm{k}} - \bm{v}_{\bm{l}}) = \bm{0}.
\end{equation}
The forward-Euler discretization update
\begin{equation}\label{eq:forward-euler}
	\eta_{s, \bm{k}}^{n+1} = \eta_{s, \bm{k}}^{n} + \Delta t \sum_{t=1}^{S} n_{t}(\mathrm{D} \bm{p}_{st}^{n})_{\bm{k}}
\end{equation}
exactly preserves each species mass, total momentum and total kinetic energy
\begin{equation}\label{eq:fully-discrete-invariants}
	\mathcal{M}_{s}^{n} = m_{s} n_{s} w_{h} \sum_{\bm{k}}\eta_{s,\bm{k}}^{n}, ~
	\bm{P}^{n} = w_{h} \sum_{s=1}^{S}m_{s}n_{s}\sum_{\bm{k}}\bm{v}_{\bm{k}}\eta_{s,\bm{k}}^{n}, ~
	\mathcal{E}^{n} = w_{h} \sum_{s=1}^{S}m_{s}n_{s}\sum_{\bm{k}}\frac{|\bm{v}_{\bm{k}}|^{2}}{2}\eta_{s,\bm{k}}^{n}.
\end{equation}
Moreover, the semi-discrete entropy satisfies
\begin{equation}\label{eq:discrete-entropy-law}
	\frac{\dd \mathcal{S}_{h}}{\dd t} = \frac{w_{h}^{2}}{2} \sum_{s, t=1}^{S} n_{s} n_{t} \sum_{\bm{k}, \bm{l}} \eta_{s,\bm{k}}\eta_{t,\bm{l}} \bm{\Delta}[\log\eta]_{st, \bm{k}\bm{l}}^{T} \bm{\omega}_{st, \bm{k}\bm{l}} \bm{\Delta}[\log\eta]_{st, \bm{k}\bm{l}} \geq 0,
\end{equation}
where
\begin{equation}\label{eq:discrete-entropy}
	\mathcal{S}_{h} = - w_{h} \sum_{s=1}^{S} n_{s} \sum_{\bm{k}} \eta_{s,\bm{k}} \log \eta_{s,\bm{k}}.
\end{equation}
\end{proposition}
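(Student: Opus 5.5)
The plan is to mirror the continuum weak identity \cref{eq:continuous-weak-identity} at the discrete level. The only tools needed are the summation-by-parts identity \cref{eq:centered-sbp} and the reciprocal relabeling $(s,t,\bm{k},\bm{l})\leftrightarrow(t,s,\bm{l},\bm{k})$ enabled by $\bm{\omega}_{st,\bm{k}\bm{l}}=\bm{\omega}_{ts,\bm{l}\bm{k}}$.

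First I will derive a discrete weak form. For any grid functions $z_s$ with sufficient growth control, multiply \cref{eq:centered-semidiscrete} by $w_h n_s z_{s,\bm{k}}$ and sum over $s,\bm{k}$. By \cref{eq:centered-sbp}, using the assumed flux decay so that no boundary terms appear, this gives
\begin{equation*}
w_h\sum_s n_s\sum_{\bm{k}} z_{s,\bm{k}}\dot\eta_{s,\bm{k}} = -w_h^2\sum_{s,t}n_sn_t\sum_{\bm{k},\bm{l}} \frac{(\mathrm{G}z_s)_{\bm{k}}^T}{m_s}\,\bm{\omega}_{st,\bm{k}\bm{l}}\,\eta_{s,\bm{k}}\eta_{t,\bm{l}}\,\bm{\Delta}[\log\eta]_{st,\bm{k}\bm{l}}.
\end{equation*}
Relabeling with reciprocity, and using $\bm{\Delta}[\cdot]_{ts,\bm{l}\bm{k}}=-\bm{\Delta}[\cdot]_{st,\bm{k}\bm{l}}$, I average the two expressions. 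This replaces $(\mathrm{G}z_s)_{\bm{k}}/m_s$ by $\tfrac12\bm{\Delta}[z]_{st,\bm{k}\bm{l}}$. The key identity is therefore
\begin{equation*}
w_h\sum_s n_s\sum_{\bm{k}} z_{s,\bm{k}}\dot\eta_{s,\bm{k}} = -\frac{w_h^2}{2}\sum_{s,t}n_sn_t\sum_{\bm{k},\bm{l}}\eta_{s,\bm{k}}\eta_{t,\bm{l}}\,\bm{\Delta}[z]^T_{st,\bm{k}\bm{l}}\,\bm{\omega}_{st,\bm{k}\bm{l}}\,\bm{\Delta}[\log\eta]_{st,\bm{k}\bm{l}}.
\end{equation*}
This step uses the symmetry $\bm{\omega}=\bm{\omega}^T$ only implicitly, to keep the quadratic form well placed.

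For the invariants under forward Euler, note that each invariant is linear in $\eta$. Its increment is therefore $\Delta t$ times the same bilinear sum evaluated at $\eta^n$, with $z_s$ chosen as $m_s\delta_{sr}$, $m_s\bm{v}_{\bm{k}}$ componentwise, or $m_s|\bm{v}_{\bm{k}}|^2/2$. No time-continuity is needed, so exactness holds at each step.

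The next step is to check how centered differences act on these choices. Constants have $\mathrm{G}=0$, which gives species mass conservation. Linear functions are differentiated exactly, $\mathrm{G}(m_s v_\alpha)=m_s\bm{e}_\alpha$, so $\bm{\Delta}=\bm{e}_\alpha-\bm{e}_\alpha=0$ and total momentum is conserved. For the quadratic, $(\mathrm{G}|\bm{v}|^2/2)_{\bm{k},\alpha}=\frac{(k_\alpha+1)^2-(k_\alpha-1)^2}{4}h_\alpha=k_\alpha h_\alpha$. The centered difference is thus exact on quadratics, giving $\bm{\Delta}=\bm{v}_{\bm{k}}-\bm{v}_{\bm{l}}$. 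Then $\bm{\omega}_{st,\bm{k}\bm{l}}(\bm{v}_{\bm{k}}-\bm{v}_{\bm{l}})=\bm{0}$, transposed using symmetry, kills the summand, and energy is conserved. I expect this exactness-on-quadratics observation, together with confirming that summation by parts with polynomially growing $z$ is legitimate under the decay hypothesis, to be the main point needing care.

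Finally, for entropy, differentiate \cref{eq:discrete-entropy} in time to get $\dot{\mathcal{S}}_h=-w_h\sum_s n_s\sum_{\bm{k}}(1+\log\eta_{s,\bm{k}})\dot\eta_{s,\bm{k}}$, which is well defined since $\eta_{s,\bm{k}}>0$. Applying the key identity with $z_s=-(1+\log\eta_s)$, the constant drops out of $\bm{\Delta}$. This yields exactly \cref{eq:discrete-entropy-law}. Nonnegativity follows termwise from $\bm{\omega}_{st,\bm{k}\bm{l}}\succeq0$ and $\eta>0$.
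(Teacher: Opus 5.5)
Your proposal is correct and follows essentially the same route as the paper: a discrete weak identity from summation by parts and the reciprocal relabeling $(s,t,\bm{k},\bm{l})\leftrightarrow(t,s,\bm{l},\bm{k})$, followed by the test functions $m_s\delta_{sr}$, $m_s v_{\bm{k},\alpha}$, $m_s|\bm{v}_{\bm{k}}|^2/2$ (using exactness of $\mathrm{G}$ up to quadratics and linearity of the invariants under forward Euler) and $-(1+\log\eta_s)$ for the entropy. One small correction: the weak identity itself uses only reciprocity, not $\bm{\omega}=\bm{\omega}^{T}$. Symmetry enters only in the energy step, where it turns the right-null condition into $(\bm{v}_{\bm{k}}-\bm{v}_{\bm{l}})^{T}\bm{\omega}_{st,\bm{k}\bm{l}}=\bm{0}$, and you do use it correctly there.
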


\begin{proof}
\Cref{eq:centered-differences} satisfies \cref{eq:centered-sbp} and
\begin{equation}\label{eq:centered-polynomial-exactness}
	\mathrm{G} 1 = \bm{0}, \quad \mathrm{G} v_{\alpha} = \bm{e}_{\alpha}, \quad \mathrm{G} \frac{|\bm{v}|^{2}}{2} = \bm{v}.
\end{equation}
Pair $(s,t,\bm{k},\bm{l})$ with $(t,s,\bm{l},\bm{k})$, and apply \cref{eq:discrete-kernel-conditions} for every test function $\varphi_{s,\bm{k}}$ give
\begin{equation}\label{eq:discrete-weak-identity}
	w_{h} \sum_{s=1}^{S} n_{s} \sum_{\bm{k}} \varphi_{s,\bm{k}} \dot{\eta}_{s,\bm{k}} = 
    - \frac{w_{h} ^{2}}{2} \sum_{s,t=1}^{S}n_{s}n_{t} \sum_{\bm{k},\bm{l}} \eta_{s,\bm{k}}\eta_{t,\bm{l}} 
    \bm{\Delta}[\varphi]_{st,\bm{k}\bm{l}}^{T} \bm{\omega}_{st,\bm{k}\bm{l}} \bm{\Delta}[\log\eta]_{st,\bm{k}\bm{l}}.
\end{equation}
The test functions $\varphi_{s,\bm{k}} = m_{s} \delta_{sr}$, $m_{s}v_{\bm{k},\alpha}$, and $m_{s}|\bm{v}_{\bm{k}}|^{2}/2$ correspond to species-$r$ mass, total momentum and total kinetic energy, and yield $\bm{\Delta}[\varphi]_{st,\bm{k}\bm{l}} = \bm{0}$, $\bm{0}$, and $\bm{v}_{\bm{k}}-\bm{v}_{\bm{l}}$, respectively.
Therefore, the forward-Euler update satisfies the full discrete conservation laws, since the invariants are linear in the grid values.

Finally, taking $\varphi_{s,\bm{k}}=-(1+\log\eta_{s,\bm{k}})$ in \cref{eq:discrete-weak-identity} yields the semi-discrete entropy production \cref{eq:discrete-entropy-law}.
\end{proof}

\begin{remark} 
On a finite velocity grid, a sufficient boundary condition for the conservation and entropy arguments is the discrete no-flux condition
\begin{equation}
    \bm{\mathcal{F}}_{s} \cdot \bm{n}=0 \quad \text{on } \partial \Omega_{v}, \quad \bm{\mathcal{F}}_{s}=\sum_{t=1}^{S} n_{t} \bm{p}_{st},\quad s=1,\ldots,S,
\end{equation}
which makes the summation-by-parts boundary term vanish.
In addition, the boundary closure of $\mathrm{G}$ must reproduce the collision invariants \cref{eq:centered-polynomial-exactness}.
The discrete Maxwellian is stationary because $\bm\Delta[\log\eta^M]_{st,\bm k\bm l} = -\frac{\bm{v}_{\bm k}-\bm v_{\bm l}}{k_B T}$ lies in the null space of $\bm\omega_{st,\bm k\bm l}$.
The same conservation argument extends to Runge-Kutta, summation-by-parts, and finite-volume schemes by using conservative residuals and satisfying properties of the gradient-divergence pair.
\end{remark}

\subsection{Transport coefficients}
To further validate the accuracy of the multi-species collision operator, we compare the predicted transport coefficients in the hydrodynamic limit of the kinetic model \cref{eq:multi-kinetic} with full MD results. Specifically, we consider the tracer self-diffusion coefficients $D_{s}^{\rm{tr}}$ and the shear viscosity $\eta_{s}$ of individual species $s$.

For the microscale MD, the diffusion $D_{s}^{\rm{tr}}$ and shear viscosity $\eta_{s}$ are defined by 
\begin{equation}
\begin{aligned}
    D_{s}^{\rm{tr,MD}} &= \frac{1}{3N_{s}}\sum_{i\in s} \int_{0}^{\infty} \left\langle \bm{c}_{i}(t) \cdot \bm{c}_{i}(0) \right\rangle \dd t, \\
	\eta_{st}^{\rm{MD}} &= \frac{V}{3k_{B}T} \int_{0}^{\infty}  \sum_{\alpha \beta \in \{xy,xz,yz\}} \left\langle P_{s,\alpha\beta}^{\rm{kin}}(t) P_{t, \alpha\beta}^{\rm{kin}}(0) \right\rangle \dd t, \\
	P_{s, \alpha\beta}^{\rm{kin}} &= \frac{1}{V}\sum_{i\in s} m_{s} c_{i,\alpha} c_{i,\beta}, \quad 
    \boldsymbol{\eta}^{\rm MD}=[\eta_{st}^{\rm MD}]_{s,t=1}^{S},
\end{aligned}
\end{equation}
where $\bm{\eta}_{\rm sp}^{\rm MD} = \boldsymbol{\eta}^{\rm MD}\bm{1}$ represents the vector-valued shear viscosity for individual species $\eta_s^{\rm MD}$, $s = 1, \cdots, S$.

For the kinetic model \cref{eq:multi-kinetic}, the transport coefficients can be derived from the perturbation of the density $\eta_s$ near the equilibrium state. 
At the common Maxwellian $M_{s}$, write the perturbation density $f_{s}=n_{s}M_{s}(1+h_{s})$ and the evolution of the perturbation $\bm{h}$ as $\partial_{t}\bm{h}=-\mathcal{A}\bm{h}$.
The positive linearized operator is characterized by
\begin{equation}\label{eq:transport-dirichlet-short}
\begin{aligned}
	\langle \bm{h}, \mathcal{A} \bm{k} \rangle_{\bm{n}M} =& \frac{1}{2} \sum_{s,t=1}^{S}n_{s}n_{t} \iint M_{s}M_{t}' \bm{D}_{st}[h]^{T} \bm{\omega}_{st} \bm{D}_{st}[k] \dd\bm{v}\dd\bm{v}',
    \\ 
    (\mathcal A \bm{h})_{s} =& - \dfrac{1}{m_{s} M_{s}} \sum_{t = 1}^{S} n_{t} \nabla_{\bm{v}}\!\cdot\!\int M_{s} M_{t}' \bm{\omega}_{st} \bm{D}_{st}[h] \dd \bm{v}'.
\end{aligned}
\end{equation}
where $\langle \bm{h}, \bm{k} \rangle_{\bm{n}M} = \sum_{s=1}^{S}n_{s} \int M_{s} h_{s} k_{s}\dd\bm{v}$ and $\bm{D}_{st}[h] = \nabla h_{s}/m_{s} - \nabla' h_{t}'/m_{t}$.

For tracer diffusion of species $s$, we introduce a dilute tagged population $s^{*}$
\begin{equation}
	f_{s^{*}} = n_{s^{*}}(\bm{x}) M_{s} \left[ 1 - \chi_{s}^{\mathrm{tr}}(\bm{c}) \partial_{x}\log n_{s^{*}} \right], \quad n_{s^{*}}\ll n_{s},
\end{equation}
while keeping all bath Maxwellians fixed.
The first-order tagged kinetic equation and the Fick law give the diffusion coefficient
\begin{equation}\label{eq:tracer-short}
\begin{aligned}
	\mathcal{A}_{\mathrm{tr},s} \chi_{s}^{\mathrm{tr}} =& c_{x}, \quad 
    \mathcal A_{{\rm tr}, s} h = - \dfrac{1}{m_{s}^{2} M_{s}} \sum_{t = 1}^{S} n_{t} \nabla_{\bm{v}} \cdot \int M_{t}' M_{s} \bm{\omega}_{st} \nabla h \dd \bm{v}' , \\
    J_{s^{*}, x} =& \int c_{x} f_{s^{*}} \dd \bm{v} = -D_{s}^{\mathrm{tr}} \partial_{x}n_{s^{*}}, \quad
    D_{s}^{\mathrm{tr}} = \langle c_{x}, \mathcal{A}_{\mathrm{tr},s}^{-1} c_{x} \rangle_{M_{s}}.
\end{aligned}
\end{equation}
To determine $\chi_{s}^{\mathrm{tr}}$, we use an approximation up to the second-order Sonine basis, i.e., 
\[
\chi_{s}^{\mathrm{tr}}=\sum_{j=0}^{1} a_{sj}^{D}\phi_{sj}^{D}, \quad \phi_{s0}^{D} = c_{x}, \quad 
	\phi_{s1}^{D} = \left( 5/2 - |\bm{c}|^{2}/2 \sigma_{s}^{2} \right) c_{x},
\]
where $\sigma_{s}^{2} = k_{B}T/m_{s}$. 
The coefficients $a_{sj}^{D}$ are determined by Galerkin projection 
\begin{equation}\label{eq:tracer-sonine-short}
\begin{aligned}
	\sum_{j=0}^{1} \widehat{C}_{s,ij}^{D} a_{sj}^{D} &= \widehat{b}_{s,i}^{D}, \\
	\widehat{C}_{s,ij}^{D} &= \frac{1}{m_{s}^{2}} \sum_{t=1}^{S} n_{t} \left\langle (\nabla \phi_{si}^{D})^{T} \bm{\omega}_{st} \nabla\phi_{sj}^{D} \right\rangle_{M_{s}M_{t}'},  \\
	\widehat{b}_{s,i}^{D} &= \int M_{s} \phi_{si}^{D}c_{x} \dd \bm{v},
\end{aligned}
\end{equation}
and therefore the tracer self-diffusion coefficient $D_{s}^{\mathrm{tr},(2)}$ is given by 
\begin{equation}
D_{s}^{\mathrm{tr},(2)} = (\widehat{\bm{b}}_{s}^{D})^{T} (\widehat{\bm{C}}_{s}^{D})^{-1} \widehat{\bm{b}}_{s}^{D},  
\label{eq:diffusion_kinetic}
\end{equation}
where the superscript $2$ represents the second-order Sonine basis approximation. 

Similarly, for shear viscosity $\eta_s$, we introduce perturbation $f_{s} = n_{s} M_{s} \left( 1 - \chi_{s}^{\eta} \dot{\gamma} \right)$, where $\dot{\gamma} $ is the shear rate. 
The first-order kinetic equation and Newton's law give
\begin{equation}\label{eq:viscosity-short}
	\mathcal{A}\bm{\chi}^{\eta} = \bm{G}^{\eta}, \quad 
    \Pi_{xy} = \sum_{s} m_{s} \int c_{x} c_{y} f_{s} \dd \bm{v} = -\eta \dot{\gamma}, \quad
	\eta = k_{B} T \langle \bm{G}^{\eta}, \mathcal{A}^{\dagger} \bm{G}^{\eta} \rangle_{\bm{n}M}.
\end{equation}
where $G_{s}^{\eta} = \frac{m_{s}c_{x}c_{y}}{k_{B}T}$. To determine $\bm{\chi}^{\eta}$, we use the two-term Sonine basis, i.e.,
\begin{equation}
    \chi^{\eta}_s=\sum_{j=0}^{1}  a_{sj}^{\eta}\phi_{sj}^{\eta}, \quad
	\phi_{s0}^{\eta}=c_{x}c_{y}, \quad
	\phi_{s1}^{\eta} = \left( 7/2 - |\bm{c}|^{2}/2\sigma_{s}^{2} \right)c_{x}c_{y}.
\end{equation}
The coupled $2S \times 2S$ Galerkin system is $\bm{C}^{\eta} \bm{a}^{\eta} = \bm{b}^{\eta}$, with
\begin{equation}\label{eq:viscosity-galerkin-short}
\begin{aligned}
	C_{si,tj}^{\eta} &= \frac{1}{2} \sum_{a,b=1}^{S} n_{a}n_{b} \left\langle (\bm{d}_{ab}^{si,\eta})^{T} \bm{\omega}_{ab} \bm{d}_{ab}^{tj,\eta} \right\rangle_{M_{a}M_{b}'}, \\
	b_{si}^{\eta} &= n_{s} \int M_{s} \phi_{si}^{\eta} G_{s}^{\eta} \dd \bm{v}, \quad
	\bm d_{ab}^{si,\eta} = \frac{\delta_{as}}{m_{a}} \nabla \phi_{ai}^{\eta} - \frac{\delta_{bs}}{m_{b}} \nabla' \phi_{bi}^{\eta\prime}.
\end{aligned}
\end{equation}

Partitioning by Sonine order and using $\bm{b}_{1}^{\eta}=\bm{0}$ gives
\begin{equation}\label{eq:viscosity-sonine-short}
\begin{aligned}
	\bm{C}_{\rm{eff}}^{\eta} &= \bm{C}_{00}^{\eta} - \bm{C}_{01}^{\eta}(\bm{C}_{11}^{\eta})^{-1} \bm{C}_{10}^{\eta}, \quad \quad
	\bm{B}_{0}^{\eta} = \operatorname{diag}(b_{10}^{\eta}, \ldots, b_{S0}^{\eta}), \\
	\boldsymbol{\eta}^{(2)} &= k_{B} T(\bm{B}_{0}^{\eta})^{T} (\bm{C}_{\rm eff}^{\eta})^{\dagger} \bm{B}_{0}^{\eta}, \quad 	\bm{\eta}_{\rm{sp}}^{(2)} = \boldsymbol{\eta}^{(2)}\bm{1} 
\end{aligned}
\end{equation}
where $\bm{\eta}_{\rm{sp}}^{(2)}$ represents the species-contribution vectors of shear viscosity $\eta_{s}^{(2)}$, $s = 1, \cdots, S$. 
As shown in \Cref{sec:numerical_transport_coeff}, the numerical comparison between $D_{s}^{\rm{tr,MD}}$ and $D_{s}^{\rm{tr,(2)}}$ as well as $\eta_{s}^{\rm{MD}}$ and $\eta_{s}^{\rm{(2)}}$ provides a validation of the constructed collision operator.

\section{Numerical results}\label{sec:numerical}

\subsection{Identical-species test and Landau limit}\label{subsec:1sp_relax_time}

We first test consistency of one-species plasma kinetics under an artificial splitting into two labels. Let $a$ and $b$ representing physically identical particles with different number densities $n_{a}$ and $n_{b}$, the same mass $m$, charge $q$, and  initial normalized PDF $\eta_{0}(\bm{v})$.
If all collision-kernel blocks are identical $\bm{\omega}$, the two-component model reduces to
\begin{equation}\label{eq:reduced-identical-equation}
    \partial_{t} \eta = (n_a+n_b) \mathcal{C}_{\bm{\omega}}[\eta,\eta] = n \mathcal{C}_{\bm{\omega}}[\eta,\eta], \quad 
    n=n_a+n_b,
\end{equation}
where $\mathcal{C}_{\omega}$ are represented by the present DDCO \cref{eq:two-encoder-kernel} \cref{eq:separation} and the Landau model \cref{eq:landau_model}. 
For the identical-species case, the Landau collision kernel takes a simpler form
\[
\bm{\omega}(\bm c', \bm c) \propto \mathcal{P} |\bm{u}|^{-1} \ln\Lambda, \quad  \ln\Lambda =\ln \left[ \frac{2}{\sqrt{3} \Gamma^{3/2}} \right], \quad \Gamma= \frac{q^2}{4\pi\epsilon_{0} k_{B}T} \left(\frac{4\pi n}{3}\right)^{1/3},
\]
where $\Gamma$ is the dimensionless coupling parameter and the Landau model remains valid for the weakly coupled regime $\Gamma \ll 1$ or equivalently $\Lambda \gg 1$. In this study, we choose the physical parameters such that $\Gamma$ across the weakly coupled regime $\mathcal{O}(10^{-1})$ and the moderately coupled regime $\mathcal{O}(1)$. 

We choose bi-Maxwellian with $T_{y}(0)=T_{z}(0)=4T_{x}(0)$ as the initial state for the kinetic model \cref{eq:reduced-identical-equation} and the full MD simulations. We examine two non-linear measures of the same anisotropy \cite{ichimaru1970relaxation,hellinger2009coulomb},
\begin{equation}
\nonumber
    A_{K}=\frac{T_{\perp}}{T_{\parallel}}-1, \quad
    A_{T}=\frac{T_{\perp}-T_{\parallel}}{T} = \frac{A_K}{1+2A_K/3}.
\end{equation}
where $T_{\parallel}=T_{x}$, $T_{\perp}=\frac{T_{y}+T_{z}}{2}$. For this benchmark, the Landau model predicts the short-time exponential decay of $A_K$ and $A_T$ following the Landau-Kogan moment equation
\cite{ichimaru1970relaxation,baalrud2017temperature}, and both relaxation times are scaled by
$
\tau_{\rm{L}}^{\rm{A}} \sim \frac{T^{3/2}}{n\ln\Lambda}$, i.e.,
\begin{equation}
\left.\frac{\partial \ln \tau_{\rm{L}}^{\rm{A}}}{\partial \ln T}\right|_{n} \sim \frac{3}{2} - \frac{3}{2\ln \Lambda}, \qquad \left.\frac{\partial \ln \tau_{\rm{L}}^{\rm{A}}}{\partial \ln n}\right|_{T} \sim -1 + \frac{1}{2\ln \Lambda},
\label{eq:tau_scaling}
\end{equation}
which take $1.5$ and $-1$ for the Landau limit $\Gamma \ll 1$.

\Cref{fig:AT,fig:AK} show the relaxation time predicted from the kinetic model \cref{eq:reduced-identical-equation} and the full MD simulations. In the weakly coupled regime $\Gamma = 0.1$ and $\ln \Lambda = 3.6$, the predictions of both the Landau and the present DDCO model show good agreement with the MD result, and the relaxation time $\tau^A$ scales $T$ and $n$ by $1.09$ and $-0.86$ following Eq. \cref{eq:tau_scaling}. 
As $\Gamma \ge 1$, the Landau model is inaccurate and $\ln \Lambda$ is not well-defined. The full MD results show that $\tau^A$ scales with $T$ by $0.55$ and $n$ by $-0.68$ for $A_T$, and scales with $T$ by $0.3$ and $n$ by $-0.6$ for $A_K$, respectively. 
For both cases, the prediction from the present DDCO model shows good agreement with the MD results for this moderately coupled regime.

\begin{figure}[htbp]
    \centering
    \begin{minipage}{0.32\linewidth}
    \centering
    \includegraphics[width=\linewidth]{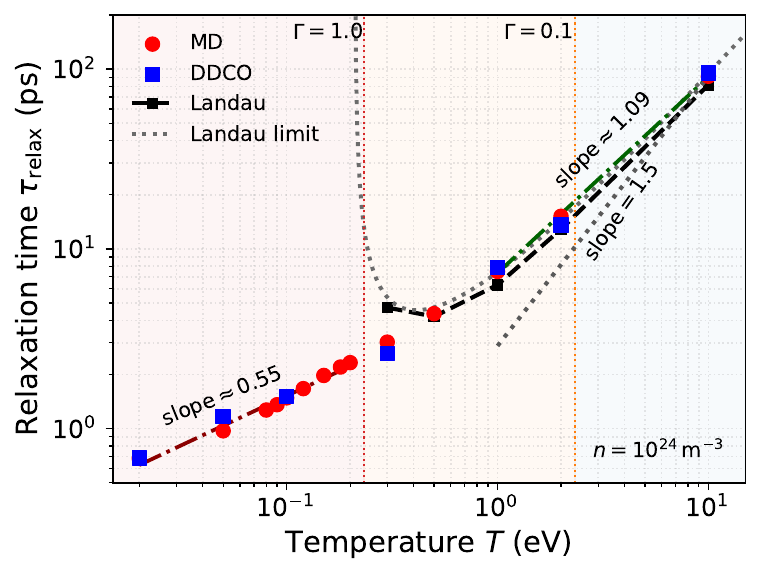}
    \end{minipage}\hfill
    \begin{minipage}{0.32\linewidth}
    \centering
    \includegraphics[width=\linewidth]{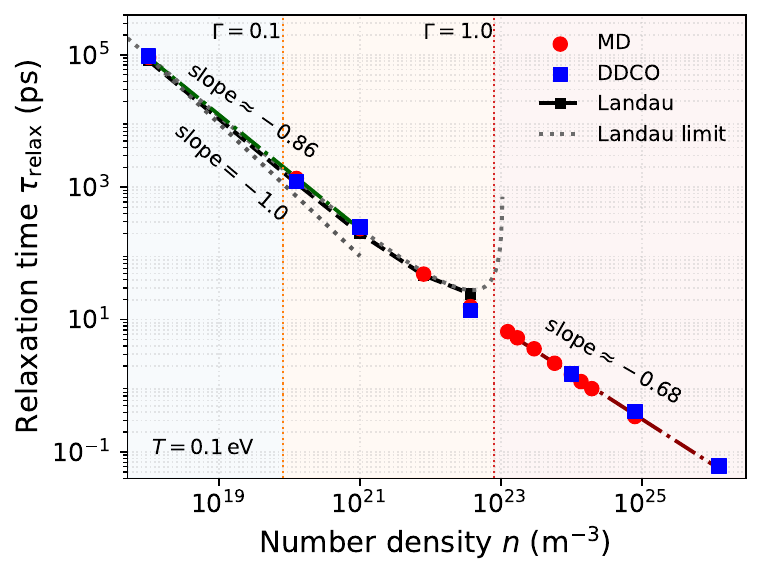}
    \end{minipage}\hfill
    \begin{minipage}{0.32\linewidth}
    \centering
    \includegraphics[width=\linewidth]{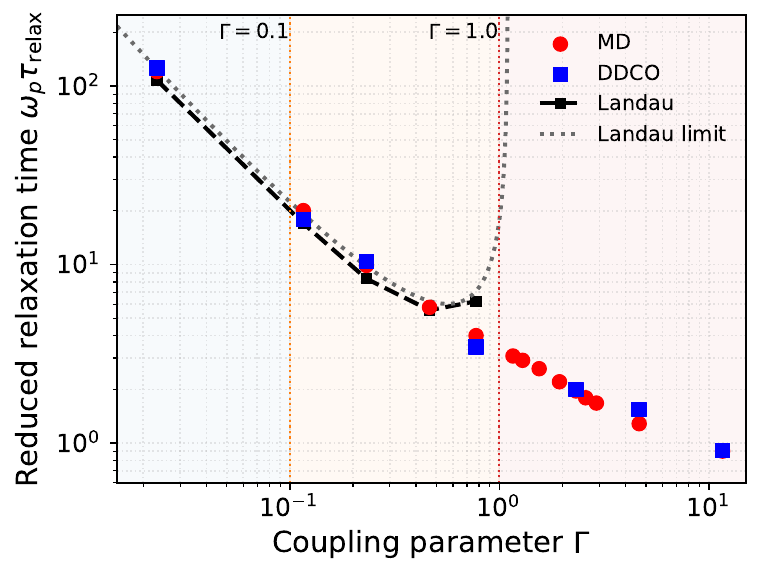}
    \end{minipage}
    \caption{Effective relaxation time of $A_{T}$ versus temperature at $n=10^{24}\,\mathrm{m}^{-3}$ (left), number density at $T=0.1\,\mathrm{eV}$ (center), and coupling parameter (right).
    Red circles and blue squares denote MD and DDCO, respectively, and the black dash-dotted curves are the finite-anisotropy and small-$A$ Landau predictions.
    Background colors separate the weak, moderate, and strong coupling regimes.}
    \label{fig:AT}
\end{figure}

\begin{figure}[htbp]
    \centering
    \begin{minipage}{0.32\linewidth}
    \centering
    \includegraphics[width=\linewidth]{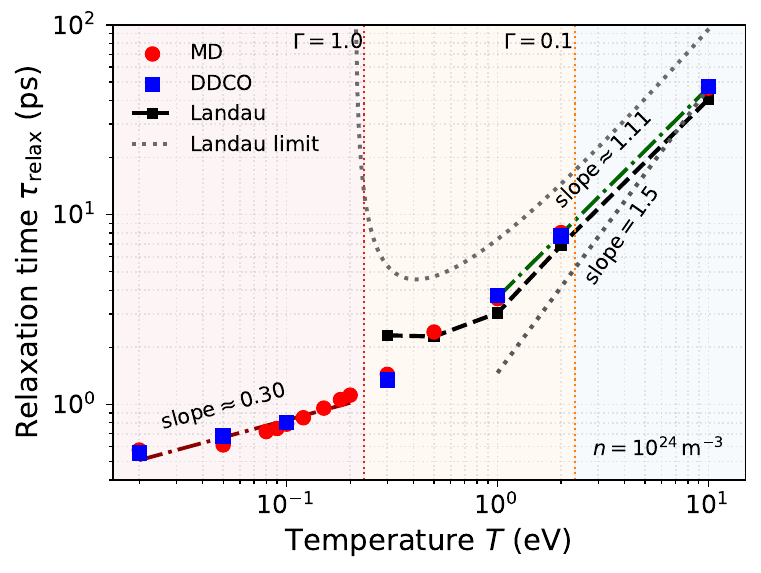}
    \end{minipage}\hfill
    \begin{minipage}{0.32\linewidth}
    \centering
    \includegraphics[width=\linewidth]{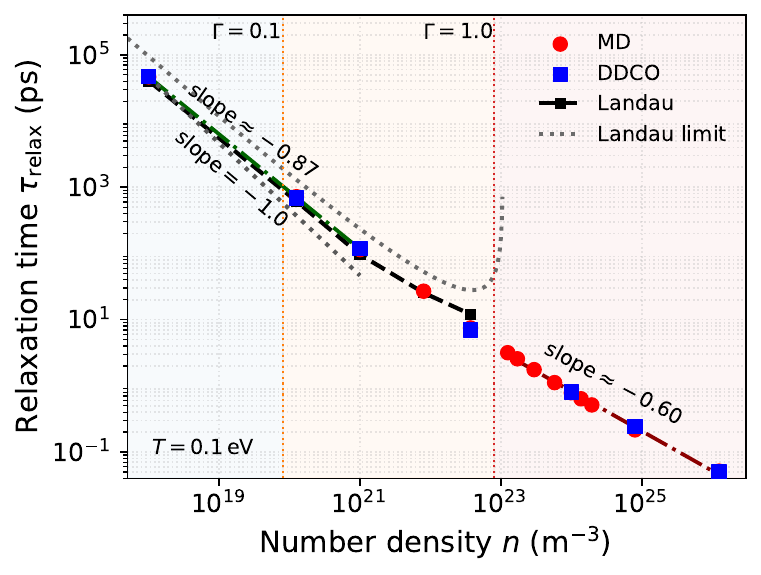}
    \end{minipage}\hfill
    \begin{minipage}{0.32\linewidth}
    \centering
    \includegraphics[width=\linewidth]{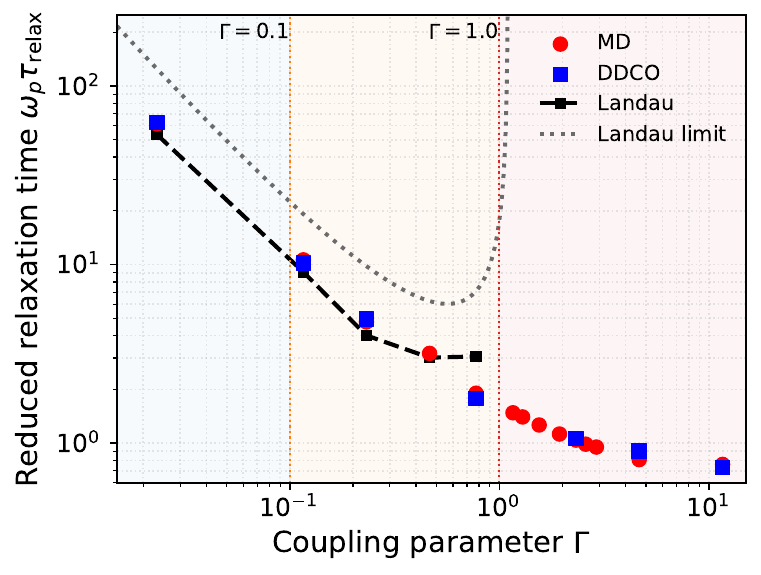}
    \end{minipage}
    \caption{Effective relaxation time of $A_{K}$ versus temperature (left), number density (center), and coupling parameter (right), with the notation and states of \Cref{fig:AT}.}
    \label{fig:AK}
\end{figure}

\subsection{Transport coefficients of two-species (Hydrogen-Tritium) and three-species (Hydrogen-Deuterium-Tritium) systems}
\label{sec:numerical_transport_coeff}
We next compare the predictions of the diffusion and shear viscosity of the present multi-species collision operator by \cref{eq:tracer-sonine-short,eq:viscosity-sonine-short} with the corresponding MD results.
All species are singly charged, $q_{s}=e$, and all three cases have $T=0.2\,\mathrm{eV}$.
Case 1 uses $(m_{a}, m_{b})=(m_{p}, 3m_{p})$ and $(n_{a},n_{b})=(1\times10^{24}\,\mathrm{m}^{-3}, 3\times10^{24}\,\mathrm{m}^{-3})$, case 2 reverses the two densities, and case 3 uses $(m_{a},m_{b},m_{c})=(m_{p}, 2m_{p}, 3m_{p})$ and $n_{a}=n_{b}=n_{c}=10^{24}\,\mathrm{m}^{-3}$.
For any reported quantity $X$, we define the relative error $\varepsilon_{X} = |X^{\rm DDCO}-X^{\rm MD}|/|X^{\rm MD}|$.

\begin{table}[htbp] 
	\caption{Species tracer diffusion coefficients predicted from MD and the DDCO model.}
    \label{tab:tracer-comparison}
    \centering
	\begin{tabular}{ccccc}
	\toprule
	Case & Species & $D_{s}^{\rm tr,MD}$ ($\mathrm{m}^{2}/\mathrm{s}$) & $D_{s}^{\rm tr,(2)}$ ($\mathrm{m}^{2}/\mathrm{s}$) & $\varepsilon_{D_s}$ \\
	\midrule
	1 & $a$ & $2.652\times10^{-5}$ & $2.990\times10^{-5}$ & $12.7\%$ \\ 
	1 & $b$ & $2.252\times10^{-5}$ & $2.479\times10^{-5}$ & $10.1\%$ \\ 
	2 & $a$ & $3.133\times10^{-5}$ & $3.217\times10^{-5}$ & $2.7\%$ \\ 
	2 & $b$ & $2.892\times10^{-5}$ & $3.061\times10^{-5}$ & $5.8\%$ \\ 
	3 & $a$ & $3.386\times10^{-5}$ & $3.195\times10^{-5}$ & $5.6\%$ \\ 
	3 & $b$ & $3.060\times10^{-5}$ & $3.389\times10^{-5}$ & $10.8\%$ \\ 
	3 & $c$ & $2.988\times10^{-5}$ & $2.742\times10^{-5}$ & $8.2\%$ \\ 
	\bottomrule
	\end{tabular}
\end{table}

\begin{table}[htbp] 
	\caption{Species contributions to the shear viscosity $\bm\eta_{\rm sp}=\boldsymbol\eta\bm 1$ predicted from MD and the DDCO model.}
    \label{tab:viscosity-comparison}
    \centering
	\begin{tabular}{ccccc}
	\toprule
	Case & Species & $\eta_{{\rm sp},s}^{\rm MD}$ ($\mathrm{Pa}\,\mathrm{s}$) & $\eta_{{\rm sp},s}^{\rm kin,(2)}$ ($\mathrm{Pa}\,\mathrm{s}$) & $\varepsilon_{\eta_s}$ \\
	\midrule
	1 & $a$ & $3.436\times10^{-8}$ & $3.266\times10^{-8}$ & $4.9\%$ \\ 
	1 & $b$ & $1.756\times10^{-7}$ & $1.712\times10^{-7}$ & $2.5\%$ \\ 
	2 & $a$ & $9.377\times10^{-8}$ & $9.772\times10^{-8}$ & $4.2\%$ \\ 
	2 & $b$ & $7.372\times10^{-8}$ & $8.247\times10^{-8}$ & $11.9\%$ \\ 
	3 & $a$ & $4.542\times10^{-8}$ & $4.037\times10^{-8}$ & $8.9\%$ \\ 
	3 & $b$ & $5.892\times10^{-8}$ & $6.271\times10^{-8}$ & $6.4\%$ \\ 
	3 & $c$ & $8.128\times10^{-8}$ & $8.802\times10^{-8}$ & $8.3\%$ \\ 
	\bottomrule
	\end{tabular}
\end{table}

We emphasize that the transport coefficients have not been used in the relaxation loss function in Eq. \cref{eq:loss}. Rather, these comparisons examine the linear response of the collision operator with respect to various perturbation modes near equilibrium. 
As shown in \Cref{tab:tracer-comparison,tab:viscosity-comparison}, the predictions of the present DDCO model show good agreement with the MD results, with errors range from about $2.5\%$ to $12.7\%$. In particular, the accurate agreement for the three-species system indicates that the present DDCO model achieves accurate modeling of the inter-species relaxation processes.

\subsection{Temperature relaxation of bi-Maxwellian distributions}

With the three cases defined in \Cref{sec:numerical_transport_coeff}, we examine the kinetic relaxation by setting the initial temperature of each species as $T_{y}(0) = T_{z}(0) = 4T_{x}(0)$ while fixing the scalar temperature $T=(T_{x}+T_{y}+T_{z})/3 = 0.2\,\mathrm{eV}$.
The DDCO accurately predicts both the species-dependent relaxation rates and the equilibrium temperature.
In particular, the swap of the number density ratio $n_1:n_2$ between cases 1 and 2 changes the relative relaxation of species $a$ and $b$.
In case 3, the relaxation becomes progressively slower from the lighter to the heavier component, while all components approach the same energy-constrained equilibrium.
These results verify that the constructed kernel $\bm\omega_{st}$ accurately captures both the intra- and inter-species energy exchange.

\begin{figure}[htbp]
    \centering
    \includegraphics[width=0.48\linewidth]{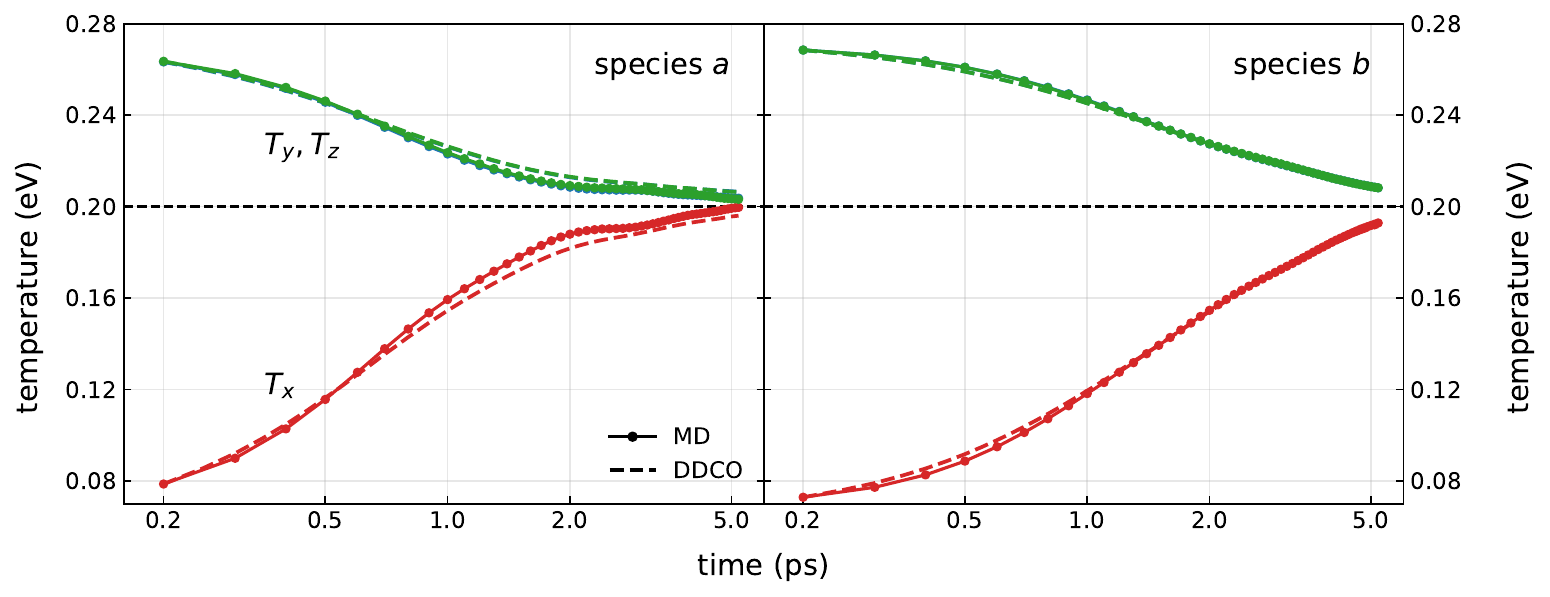} 
    \includegraphics[width=0.48\linewidth]{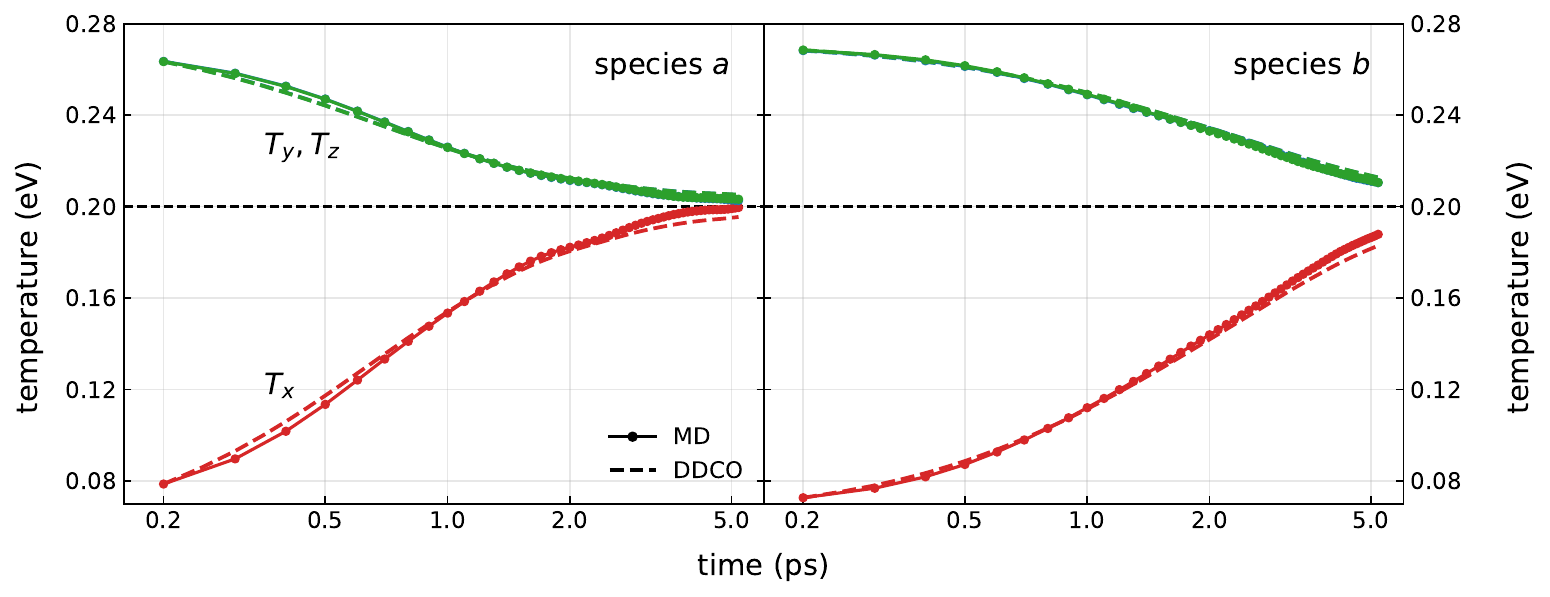} \\ 
    \includegraphics[width=0.72\linewidth]{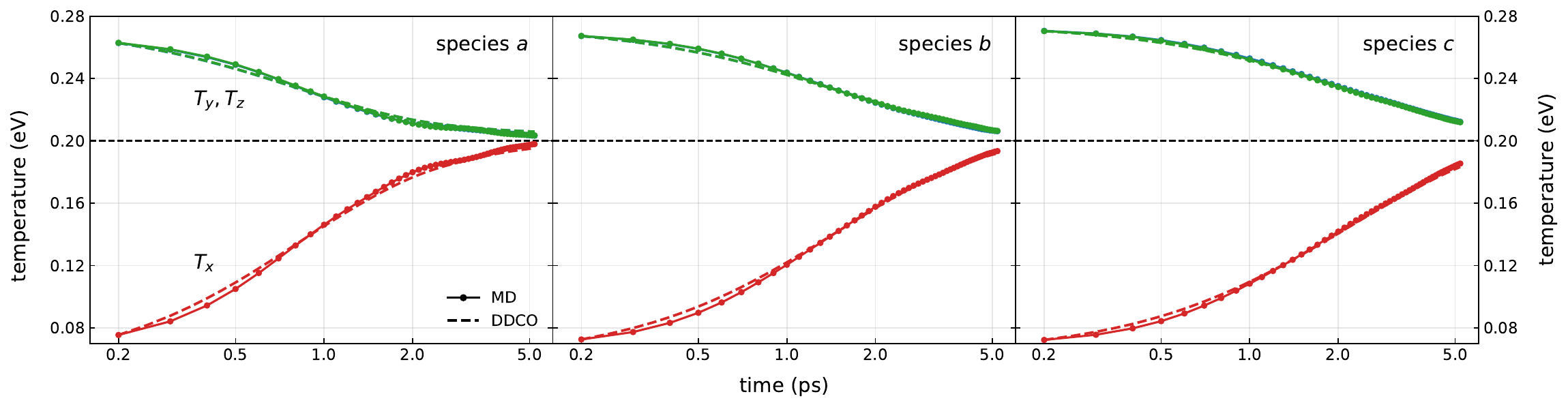}
    \caption{Component-temperature relaxation for case 1 (upper left), case 2 (upper right), and case 3 (bottom) with solid curves for MD results and dashed curves for the DDCO model.
    Red denotes $T_{x}$, green and blue denote $T_{y}$ and $T_{z}$, and the horizontal dashed line is the energy-determined equilibrium temperature $0.2\,\mathrm{eV}$.}
    \label{fig:temp}
\end{figure}

\subsection{Instantaneous velocity distribution}
Beyond the temperature relaxation, we further examine the kinetic process by comparing the instantaneous velocity distribution predicted from both the MD and DDCO model.  
In particular, we compare the kinetic process with four initial distributions: a symmetric double-well, an asymmetric double-well, a diagonal bimodal state, and a trimodal state, see in \Cref{app:md-learning}.

\begin{figure}[htbp]
    \centering
    \includegraphics[width=0.48\linewidth]{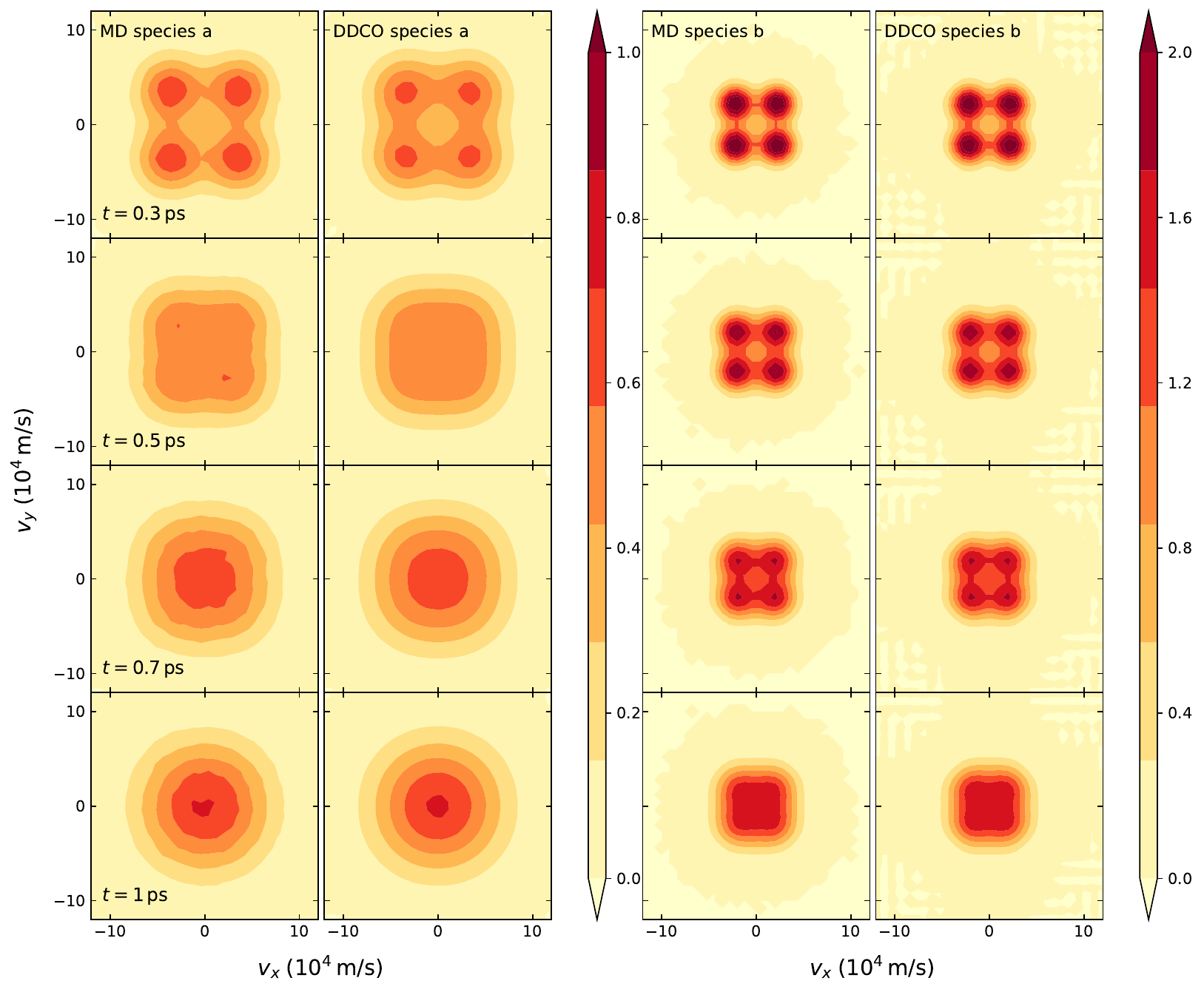} 
    \includegraphics[width=0.48\linewidth]{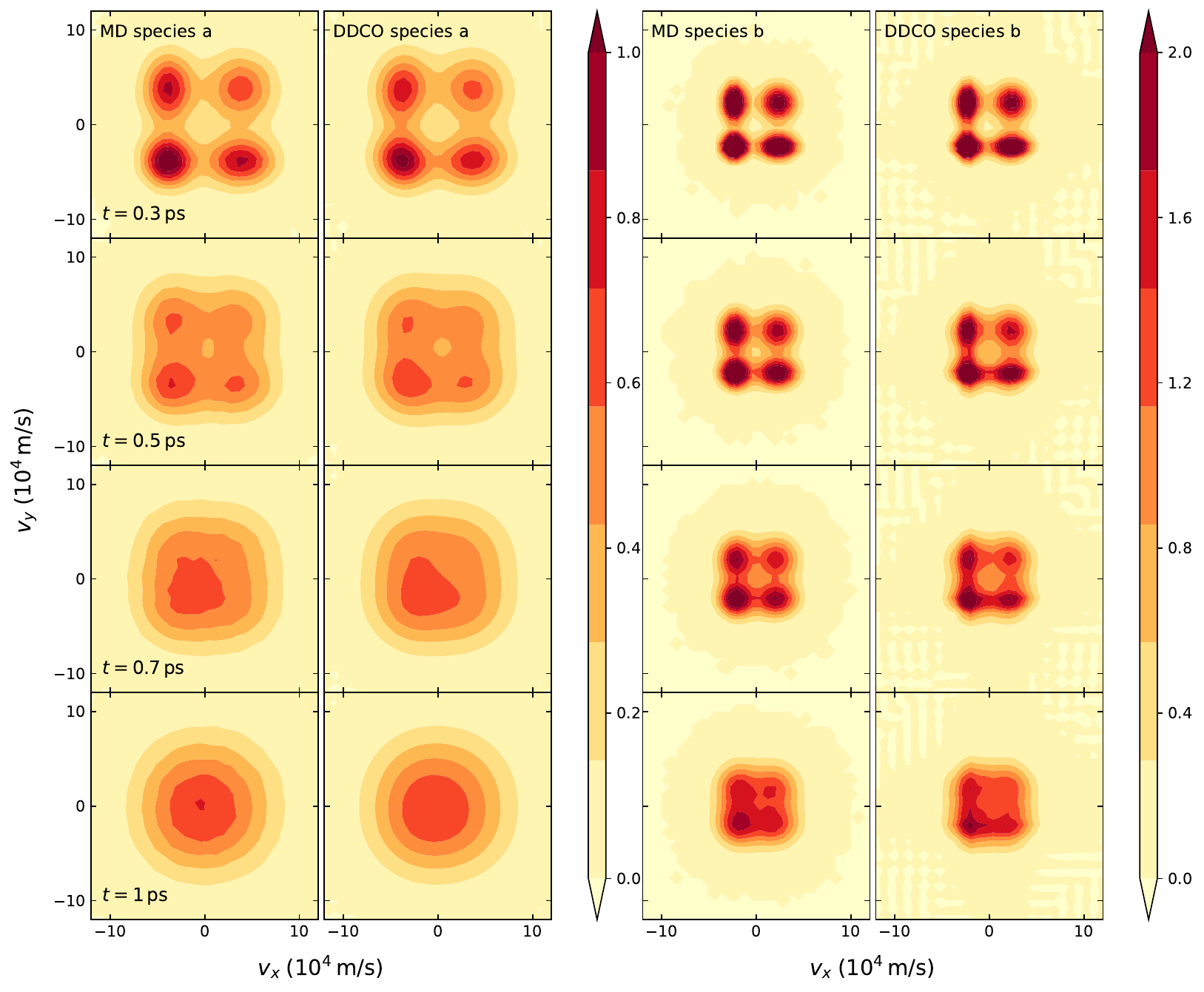} \\ 
    \centering
    \includegraphics[width=0.48\linewidth]{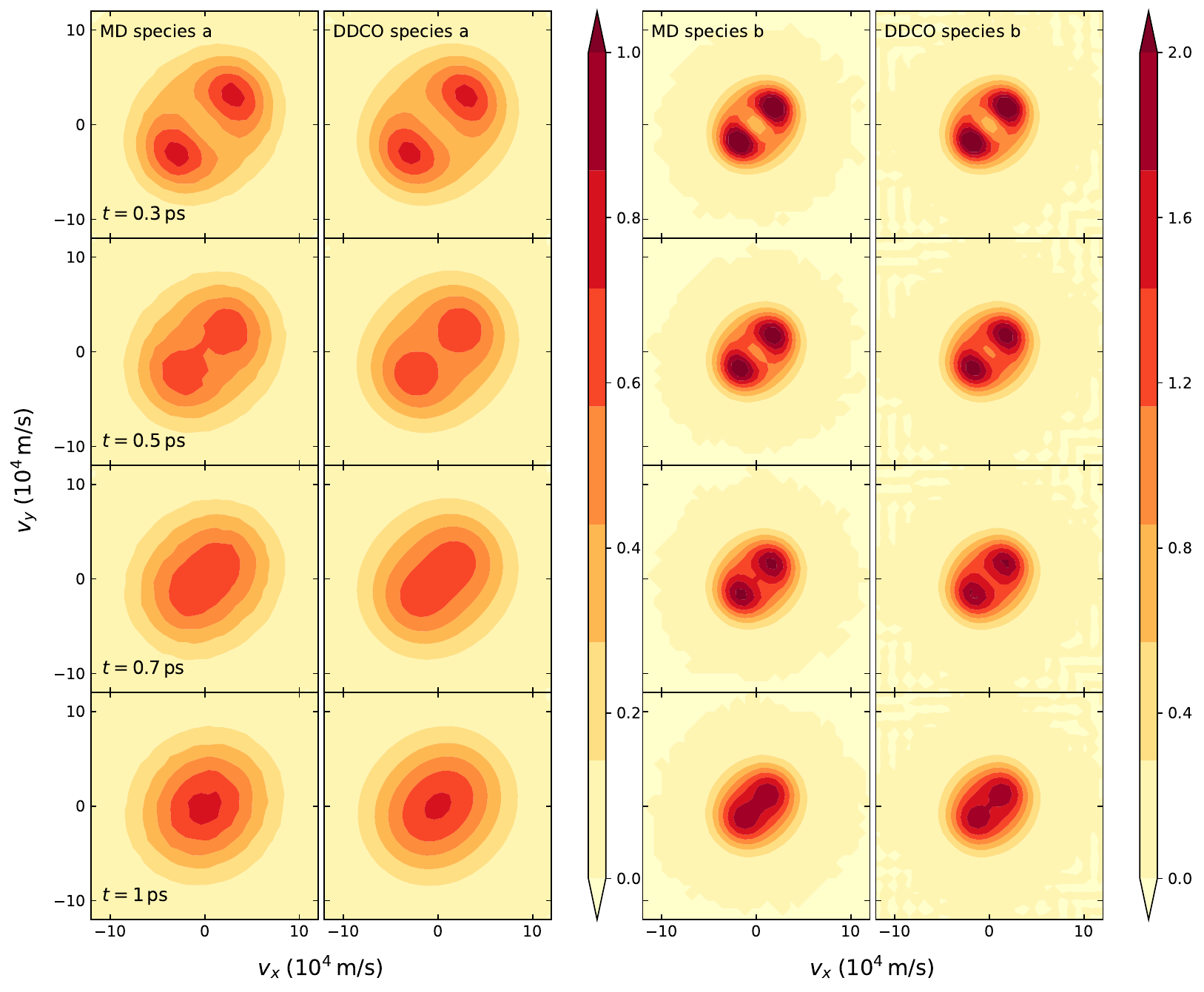}
    \includegraphics[width=0.48\linewidth]{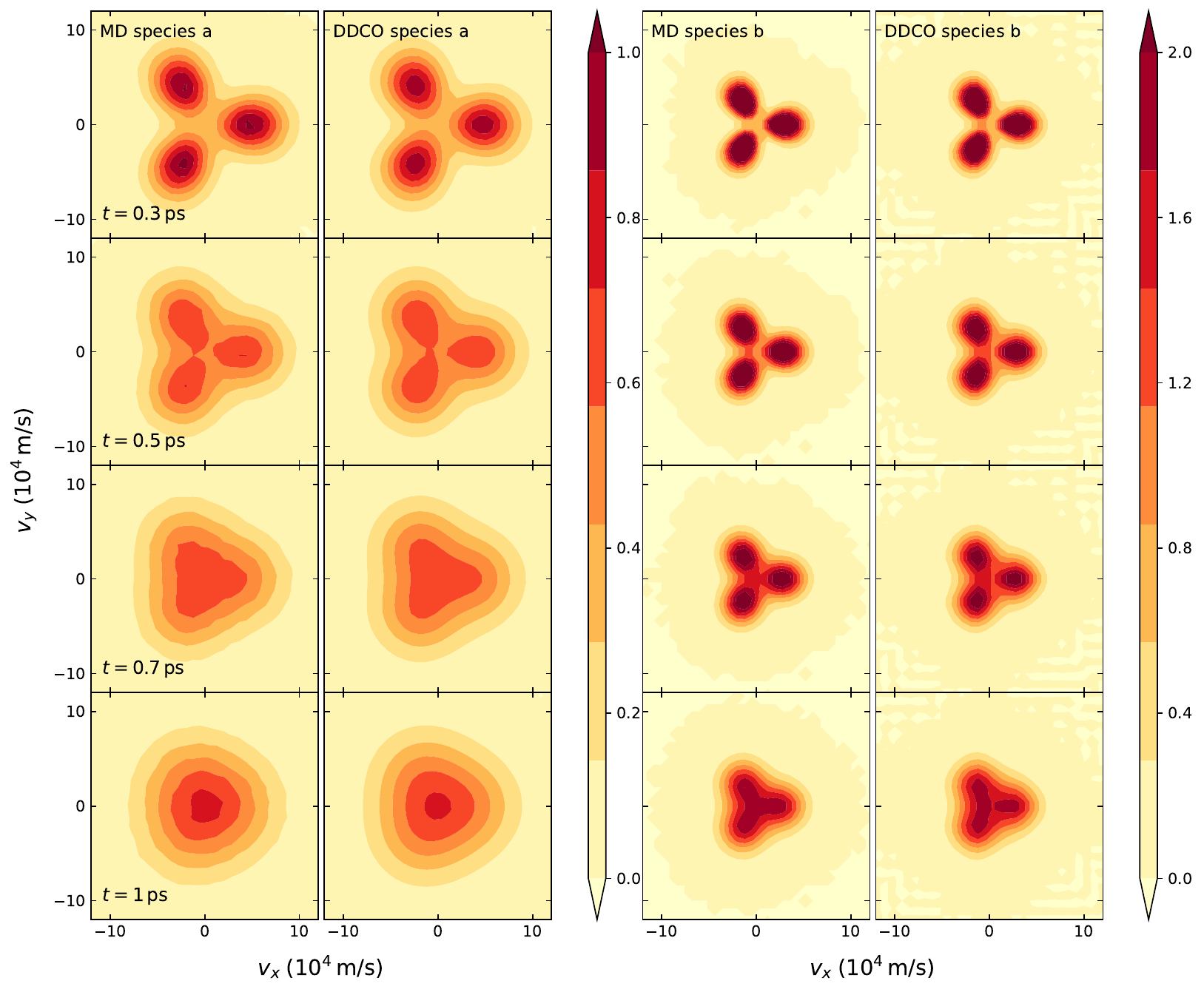}
    \caption{Comparison of the instantaneous velocity distribution on the $v_x$-$v_y$ plane for the symmetric double-well (upper left), asymmetric double-well (upper right), diagonal bimodal state (lower left), and trimodal state (lower right) in case 1.
    Within every panel, rows correspond to $t=0.3$, $0.5$, $0.7$, and $1\,\mathrm{ps}$, and the MD and DDCO columns are shown side by side for each species.}
    \label{fig:mdf_1_3}
\end{figure}

\begin{figure}[htbp]
    \centering
    \includegraphics[width=0.48\linewidth]{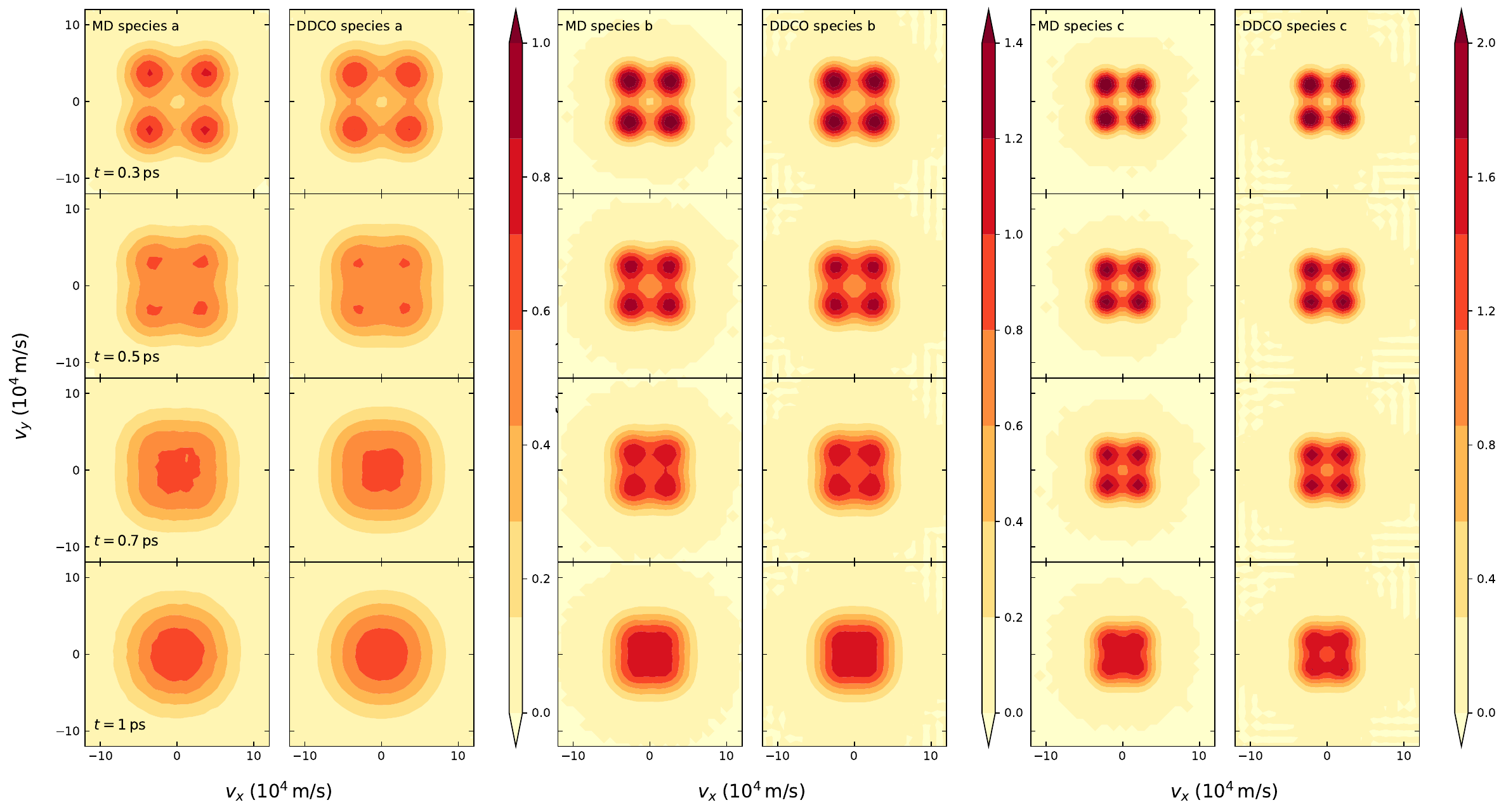} 
    \includegraphics[width=0.48\linewidth]{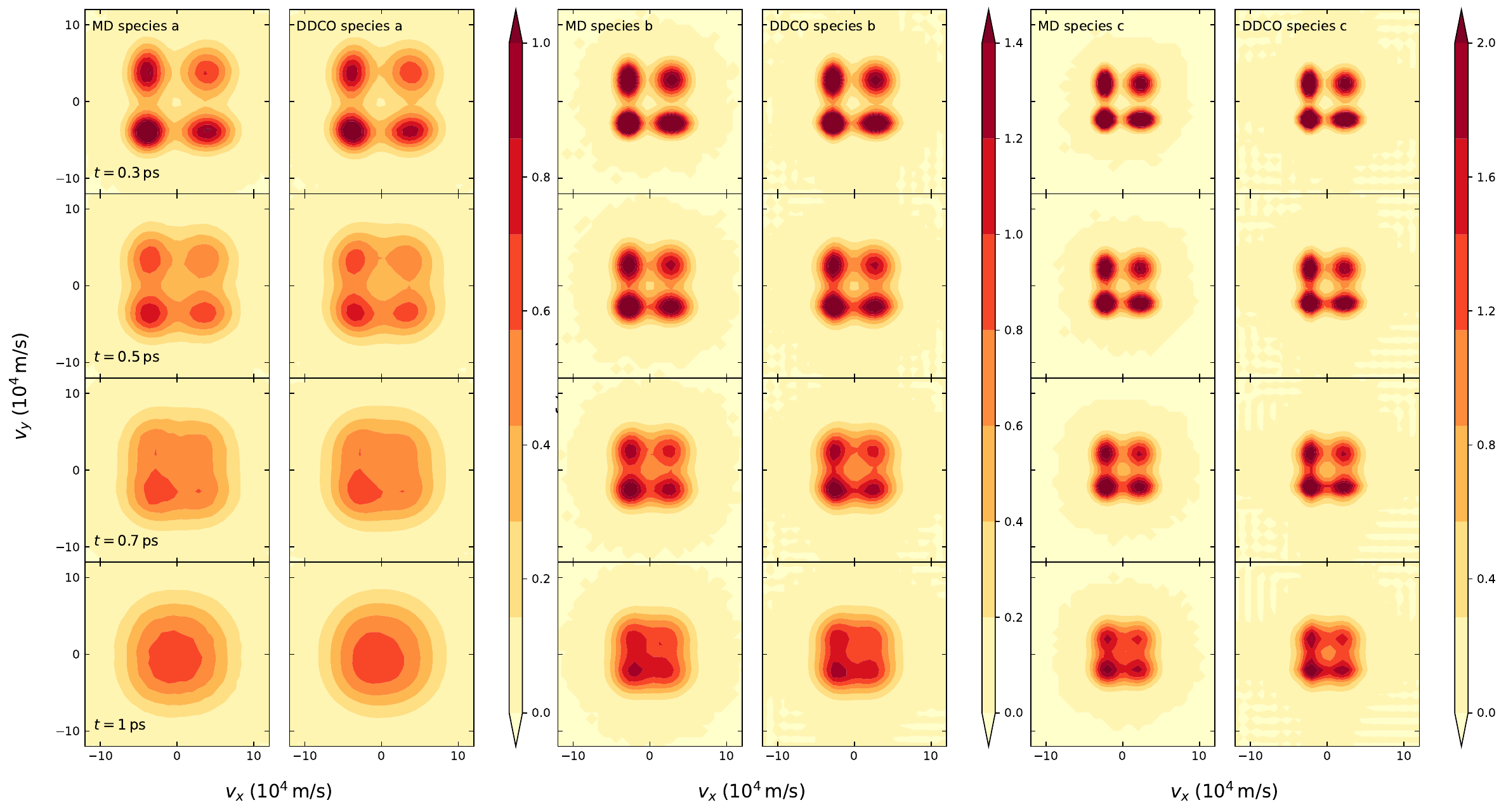} \\ 
    \centering
    \includegraphics[width=0.48\linewidth]{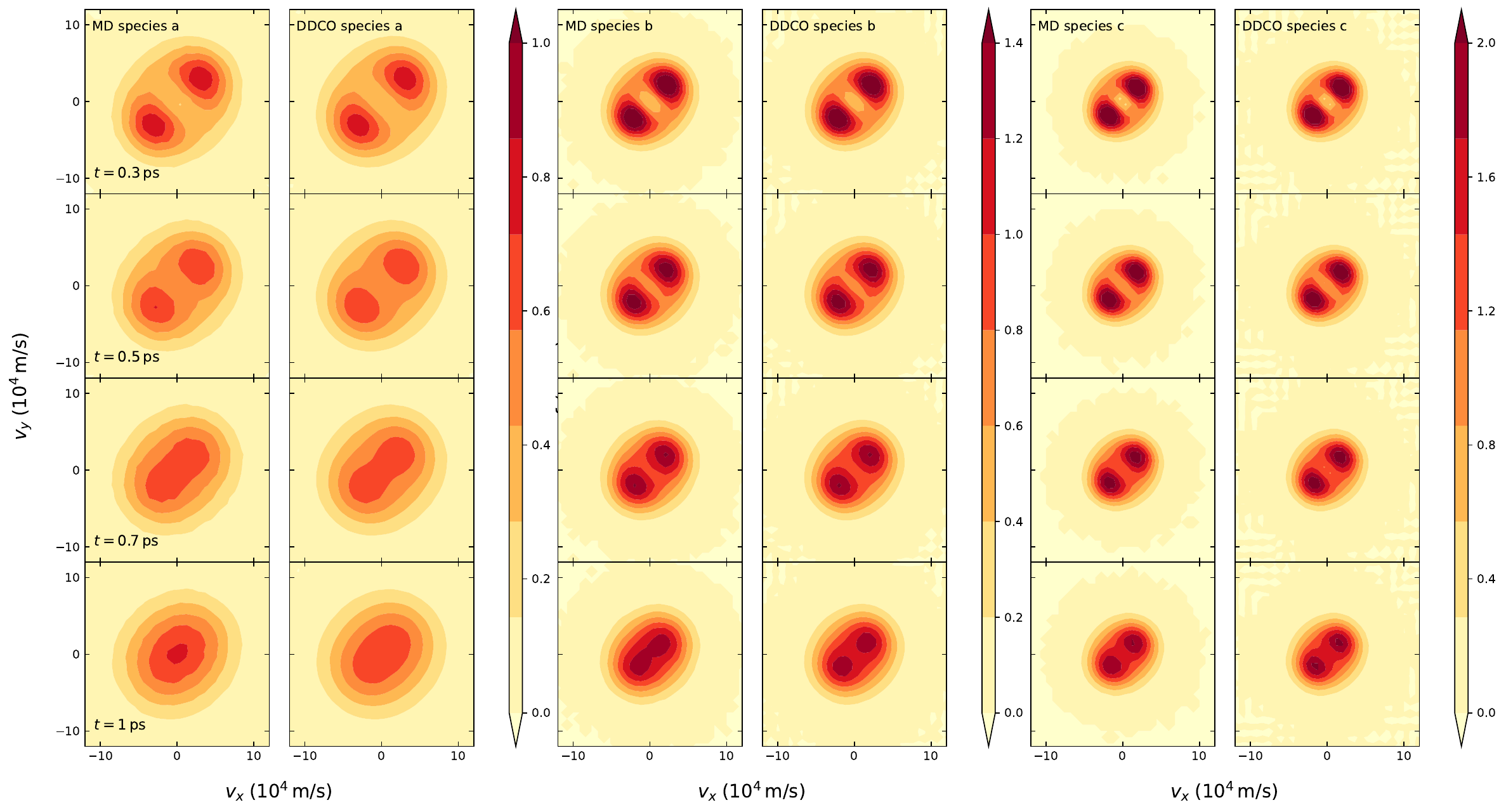}
    \includegraphics[width=0.48\linewidth]{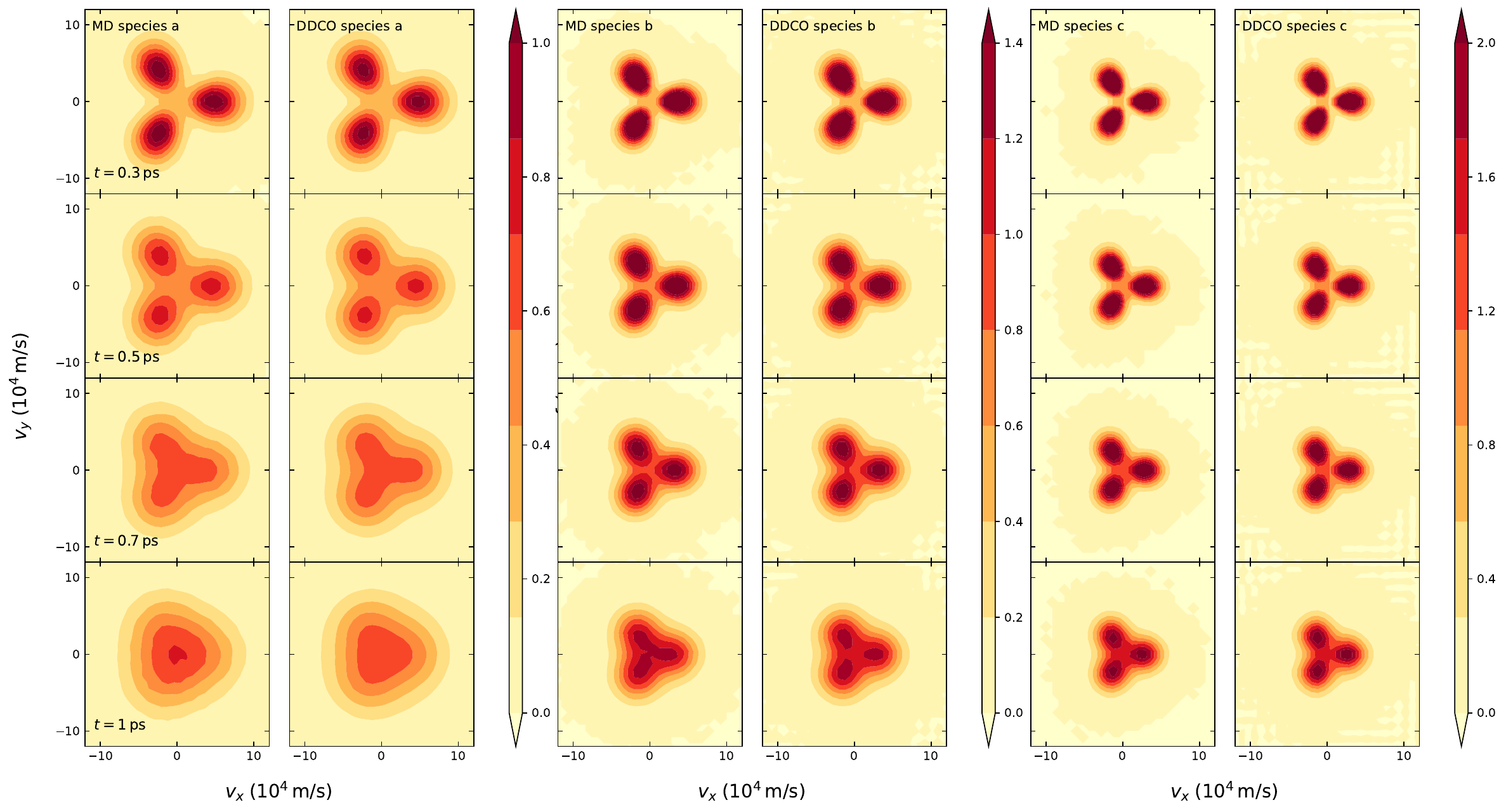}
    \caption{Comparison of the instantaneous velocity distribution on the $v_x$-$v_y$ plane for 4 initial PDFs at the same time in case 3.}
    \label{fig:mdf_3sp}
\end{figure}

\Cref{fig:mdf_1_3,fig:mdf_3sp} show the kinetic relaxation process for case 1 and case 3 defined in \Cref{sec:numerical_transport_coeff}. 
For both cases, DDCO reproduces the emergence of separated lobes, the persistence of asymmetric features at intermediate stage, and the isotropic unimodal final state.
The agreement for species with different masses and concentrations shows that the off-diagonal kernels transfer information consistently between PDFs, rather than fitting each species independently.
The smoother DDCO contours in low-probability regions are expected because MD contours retain finite-sample noise whereas the kinetic solution evolves a grid-based density.

\subsection{Comparison of instantaneous velocity distribution with the symmetric kernel and data-driven Landau form}

We then examine the role of velocity-permutation asymmetry in the off-diagonal collision kernels.
For case 2 defined in \Cref{sec:numerical_transport_coeff}, we compare the relaxation processes of MD with the full DDCO, a permutation-symmetric DDCO, and a modified Landau model.  Here, the plasma coupling parameter $\Gamma \sim \mathcal{O}(1)$ (i.e., $\ln \Lambda < 0$)  and the standard Landau model can not be directly applied. 
Accordingly, we fit the prefactor of the Landau kernel by relaxation times reported in \Cref{subsec:1sp_relax_time,sec:numerical_transport_coeff}, where the obtained model is named the data-driven Landau model. 
We impose symmetry conditions $\bm\omega_{st}(\bm c, \bm c') = \bm\omega_{st}(\bm c', \bm c)$, $s\neq t$ in the symmetric-DDCO during both training and simulation.

\Cref{fig:mdf_4methods} compares the relaxation of the initial symmetric double-well distribution.
The DDCO accurately predicts the locations and relative separation of the four non-Gaussian lobes and relaxation processes observed in MD for both species.
The permutation-symmetric DDCO retains a four-fold structure particularly for species $b$, whereas the Landau model yields different merging rates during the initial collision stage.
Although all three kinetic models approach the common equilibrium, they exhibit significant differences in transient dynamics. 
The present DDCO model without imposing the exchange symmetry constraints in Eq. \cref{eq:permutation_non_symmetry} 
is crucial for capturing the distinct responses of the different species due to unresolved correlations in the moderately coupled regime.

\begin{figure}[htbp]
    \centering
    \includegraphics[width=0.85\linewidth]{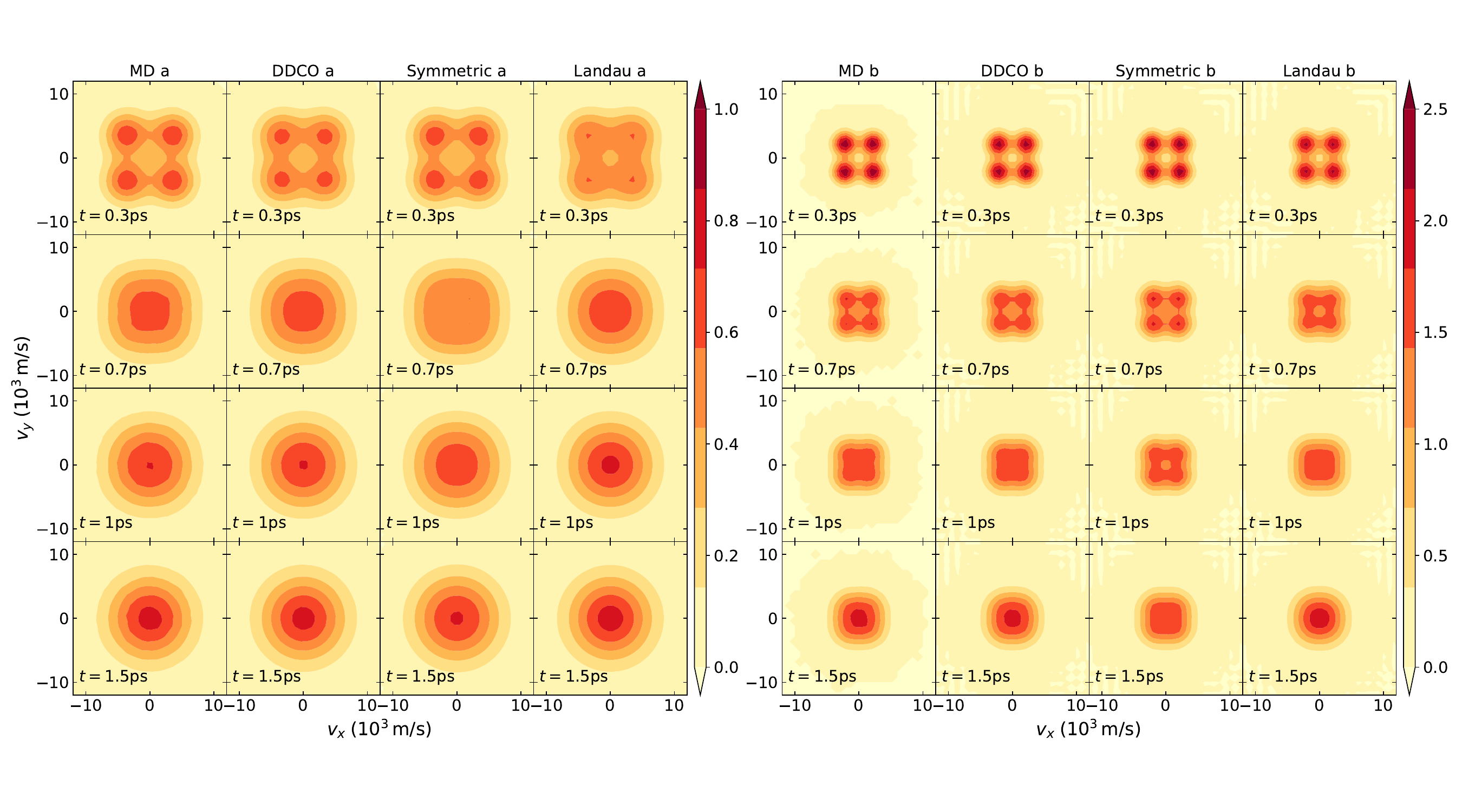}
    \caption{Comparison of the instantaneous velocity distribution on the $v_x$-$v_y$ plane predicted by MD, DDCO, permutation-symmetric DDCO, and the modified Landau model for case 2 with the symmetric double-well initial state, at $t=0.3$, $0.7$, $1.0$, and $1.5\,\mathrm{ps}$.}
    \label{fig:mdf_4methods}
\end{figure}

\subsection{Discrete structure, convergence, and computational cost}

For case 1 with the symmetric double-well initial state, we examine the relative error 
$\varepsilon_{X} = |X(t)-X(0)|/|X(0)|$ for the species mass and the total energy, and $\varepsilon_{\bm{P}}(t) = \left|\bm{P}(t)-\bm{P}(0)\right| / \sqrt{2\left|\mathcal{M}_{tot}(0)\right|\left|\mathcal{E}(0)\right|}$ for the total momentum.
\Cref{fig:MPE_etp_1_3} shows species mass and total momentum errors at approximately $10^{-16}$ and total energy error approximately $10^{-15}$, consistent with the conservation in \Cref{prop:discrete-structure}.
The entropy increases monotonically for every tested initial PDF and approaches a plateau.
This observation is consistent with the semi-discrete entropy identity, while forward Euler does not guarantee fully discrete entropy monotonicity without a positivity and time-step condition.

\begin{figure}[htbp]
    \centering
    \includegraphics[width=0.4\linewidth]{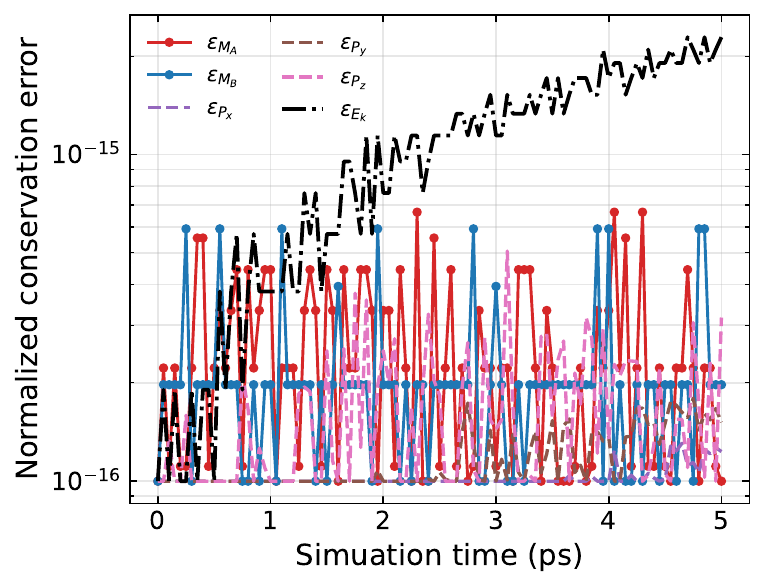} 
    \includegraphics[width=0.4\linewidth]{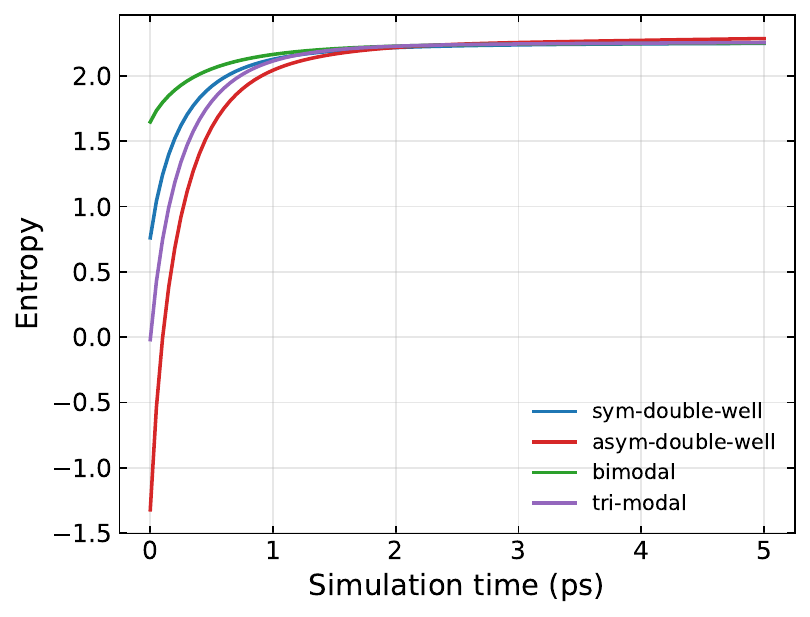}
    \caption{Left: normalized species mass, total momentum, and kinetic-energy errors for Case 1 with the symmetric double-well initial PDF.
    Right: discrete entropy for the four initial PDFs in \Cref{fig:mdf_1_3}.}
    \label{fig:MPE_etp_1_3}
\end{figure}

\Cref{fig:rel_err} shows the relative $L_2$ error for the numerical solution, where the solution with $N_{v}=400$ in each direction is used as the reference solution. The errors follow the expected second-order convergence rate.

\begin{figure}[htbp]
    \centering
    \includegraphics[width=0.4\linewidth]{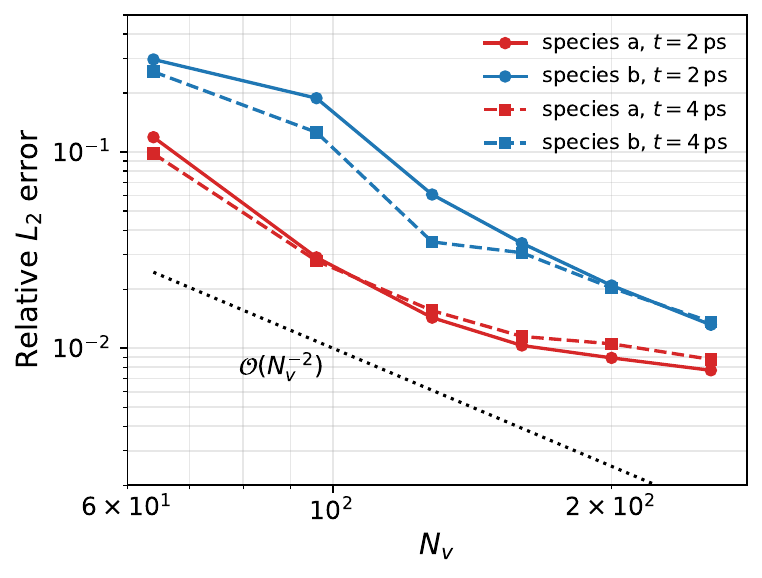}
    \includegraphics[width=0.4\linewidth]{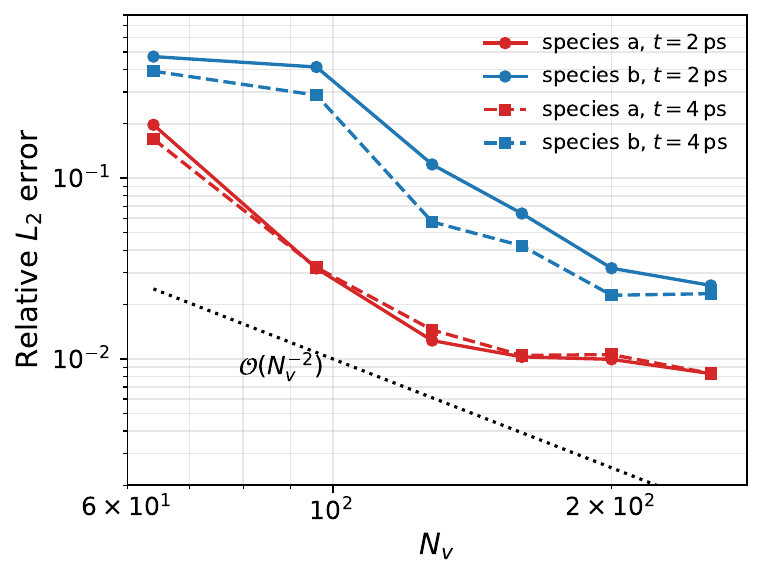}
    \caption{Relative $L_{2}$ errors at $t=2$ and $4\,\mathrm{ps}$ versus the number $N_{v}$ of nodes per velocity direction for the symmetric (left) and asymmetric (right) double-wells initial conditions.}
    \label{fig:rel_err}
\end{figure}

Finally, we examine the computational efficiency of the present DDCO model \cref{eq:multi-kinetic}. The direct evaluation of \cref{eq:multi-kinetic} with non-stationary $\bm\omega (\bm v, \bm v')$ requires $\mathcal O(N^{2})=\mathcal O(N_v^{6})$ computational cost.
As shown in \Cref{fig:time}, the separation structure formulation \cref{eq:two-encoder-kernel,eq:separation} enables the efficient FFT evaluation with $\mathcal O(N\log N) = \mathcal{O}(N_v^{3}\log N_{v})$ computational cost. 

\begin{figure}[htbp]
    \centering
    \includegraphics[width=0.4\linewidth]{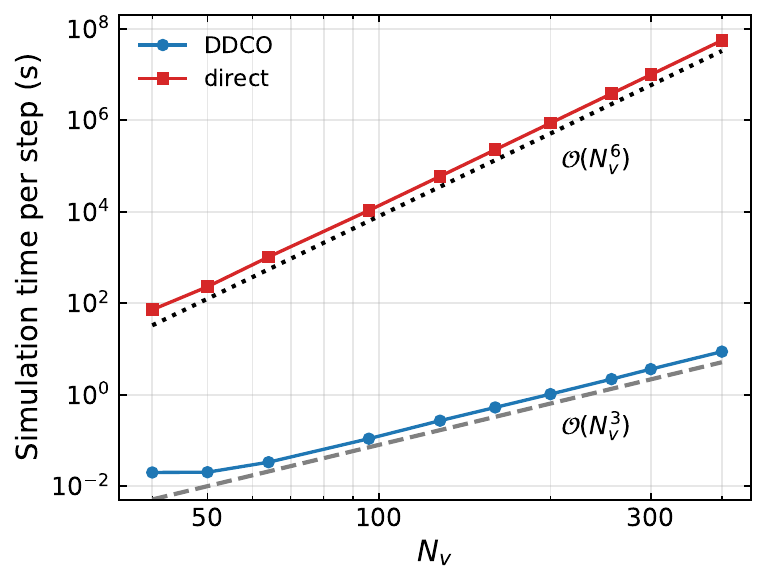}
    \caption{Wall-clock time per step for direct six-dimensional quadrature and the separated FFT implementation, with reference slopes of $\mathcal O(N_v^{6})$ and $\mathcal O(N_v^{3})$.}
    \label{fig:time}
\end{figure}

\section{Conclusions}\label{sec:summary}

In this work, we developed a structure-preserving, data-driven generalized collision operator for spatially homogeneous multi-species kinetic systems.
Within a local and pointwise-identifiable kernel class, we established necessary and sufficient kernel conditions for species-mass, total-momentum, and total-energy conservation, nonnegative entropy production, common-Maxwellian stationarity, and $O(3)$ frame indifference.
A central result is that ordered cross-species blocks must satisfy reciprocity but permit a non-symmetric structure under the permutation of the two velocity variables.
This structural freedom in Eq. \cref{eq:permutation_non_symmetry} differs from the classical Landau operator and direct extensions of one-species data-driven collision kernels \cite{zhao2025data, zhao2026fast}. 
In fact, this structure seems unique for the present model and is absent in other classical multi-species models such as the Balescu-Lenard \cite{lenard1960bogoliubov, balescu1960irreversible} and the Boltzmann \cite{boltzmann1872weitere, wild1951boltzmann} collision operators.

The admissible kernels were represented using scalar encoder functions that preserve the required physical structure by construction.
A low-rank kernel representation and random pair sampling enable efficient evaluation with complexity $\mathcal{O}(S^{2}J^{2}N\log N)$ and kernel training directly from MD trajectories.
The centered discretization preserves species mass, total momentum, and total kinetic energy under the fully discrete forward-Euler update and satisfies a nonnegative semi-discrete entropy-production identity.

The numerical results validate the model in both near-equilibrium and strongly non-equilibrium states.
The DDCO recovers the weak-coupling Landau behavior, captures temperature relaxation, and accurately predicts the diffusion and shear viscosity of multi-species systems in the moderately coupled regime, where the Landau model shows limitations. 
The learned operator also reproduces anisotropic relaxation and the evolution of non-Gaussian velocity distributions in two- and three-species mixtures.
In particular, the comparison with the permutation-symmetric DDCO and the data-driven Landau model demonstrates that the generalized ordered cross-species kernel is essential for resolving transient inter-species dynamics beyond weak coupling.
Future work will extend the framework to spatially inhomogeneous multi-species systems and incorporate local thermodynamic state dependence and self-consistent fields.

\appendix

\section{Molecular-dynamics and weak-form learning process}
\label{app:md-learning}

\paragraph{MD trajectories}
The multi-species system contains $(N_{a},N_{b}) = (10^{6}, 3 \times 10^{6})$, $(N_{a},N_{b}) = (3 \times 10^{6}, 10^{6})$, $(N_{a},N_{b},N_{c}) = (10^{6},10^{6},10^{6})$ particles for case 1, 2, 3 respectively, in a periodic cubic box of side length $10^{4}\,\text{\AA}$, and the species masses and charges are given in \Cref{sec:numerical_transport_coeff}.
Particles interact via pure Coulomb interactions, and the non-equilibrium trajectories are propagated in the NVE ensemble with initial equilibrium configurations and various velocity distributions.

\paragraph{Initial velocity distributions}
The initial velocity distributions for each species include uniform, Gaussian, bi-Maxwellian, symmetric double-well, asymmetric double-well, diagonal bimodal, and trimodal distributions, such as
\begin{equation}\label{eq:implemented-initial-distributions}
\begin{aligned}
	f_{s} &\sim \mathcal{U}[\alpha_{1},\alpha_{2}], &&\text{uniform},\\
	f_{s} &\sim \mathcal{N}_{3}([\alpha_{1},\alpha_{2},\alpha_{3}],\operatorname{diag}(\sigma_{1}^{2},\sigma_{2}^{2},\sigma_{3}^{2})), &&\text{Gaussian/bi-Maxwellian},\\
	f_{s} &\sim \prod_{i=1}^{3} \left[ \mathcal{N}(\alpha_{i},\sigma_{i}^{2}) + \mathcal{N}(\beta_{i},\gamma_{i}^{2}) \right], &&\text{double-well},\\
    f_{s} &\sim \mathcal{N}_{3}(\bm{\alpha},\sigma^{2} \bm{I}) + \mathcal{N}_{3}(-\bm{\alpha},\sigma^{2} \bm{I}), &&\text{diagonal bimodal},\\
    f_{s} &\sim \sum_{\ell=1}^{3} \alpha_{\ell} \mathcal{N}_{3}(\bm{\beta}_{\ell}, \sigma_{\ell}^{2} \bm{I}), &&\text{trimodal},
\end{aligned}
\end{equation}
with some parameters $\alpha$, $\beta$, $\gamma$ to ensure zero total momentum and the preset total kinetic energy.
The uniform and bi-Maxwellian cases are used in the training set, while trajectories with initial double-well, diagonal bimodal and trimodal distributions are used for validation.

\paragraph{Test functions}
The test functions $\psi_{k}$ in the loss function \cref{eq:loss} include radial functions, localized Gaussian functions, axis-odd functions and traceless quadratic functions, such as
\begin{equation}\label{eq:test-functions-training}
\begin{aligned}
    &\exp(-\alpha c^{2}), \quad c^{2}\exp(-\alpha c^{2}), \quad \exp[-(c^{2}-\beta_{0})^{2}], \quad \exp(-\alpha |\bm{c}-\bm{\gamma}|^{2})\\
    &c_{x} \exp(-\alpha c^{2}), \quad c_{y} \exp(-\alpha c^{2}), \quad c_{z} \exp(-\alpha c^{2}), \\
    &(c_{x}^{2}-c_{y}^{2})\exp(-\alpha c^{2}), \quad (c_{x}^{2}-c_{z}^{2})\exp(-\alpha c^{2}), \quad c_{x} c_{y} \exp(-\alpha c^{2})
\end{aligned}
\end{equation}
with $c=|\bm{c}|$ and some parameters $\alpha$, $\beta_{0}$, $\bm{\gamma}$ to cover the relevant regions of velocity space distributions.
The test functions are used to learn isotropic and anisotropic relaxation, asymmetric structures, momentum transfer, diffusion and shear modes.

\section*{Data and code availability}
The source code, input files, and trained models of this paper are publicly available at
\url{https://github.com/yuezhao1997/DDCO_multi_species}.

\section*{Acknowledgments}
This work was supported in part by the National Science Foundation under grant DMS-2143739, the Department of Energy under grant DOE-DESC0023164, and the ACCESS program through allocation MTH210005.
The authors also acknowledge the support provided by the Institute for Cyber-Enabled Research at Michigan State University.


\end{document}